\newcommand{\A}{\mathcal{A}}
\newcommand{\C}{\mathcal{C}}
\newcommand{\Z}{\mathbb{Z}}
\newcommand{\N}{\mathbb{N}}
\newcommand{\Le}{\mathcal{L}}
\newcommand{\F}{\mathcal{F}}
\newcommand{\M}{\mathcal{M}}
\newcommand{\lc}{\mathcal{L}}
\newcommand{\iC}[1]{\bigcap_{P \in {#1}^{n^2}}S_{a_P}(P)}
\newcommand{\Pe}{\mathcal{P}}
\newcommand{\ob}[1]{\bar{#1}}
\newcommand{\bsq}{\blacksquare}
\newcommand{\nsq}{\llbracket 0, n \rrbracket^2}
\DeclareMathOperator{\supp}{supp}
\DeclareMathOperator{\mine}{min}
\DeclareMathOperator{\east}{East}
\DeclareMathOperator{\west}{West}
\DeclareMathOperator{\north}{North}
\DeclareMathOperator{\south}{South}
\newcommand{\ed}{=}
\newcommand{\Sb}[1]{\ob{\Sigma}_{#1}}
\newcommand{\Pb}[1]{\ob{\Pi}_{#1}}
\newcommand{\grey}{black!50!white}
\newcommand{\coTOTAL}{\mathrm{coTOTAL}}
\newcommand{\MFOSUB}{\mathrm{FOSUB}}
\newcommand{\stateb}[3]{
    \fill[white] (#1,#2) circle (0.4);
    \draw (#1,#2) node {$q_{#3}$};
}
\title{Multidimensional tilings and MSO logic}
\author{Rémi PALLEN\inst{1}\orcidID{0009-0006-6708-6904} \and Ilkka TÖRMÄ\inst{2}\orcidID{0000-0001-5541-8517}}
\institute{ENS Paris-Saclay, Gif-sur-Yvette, France \email{remi.pallen@loria.fr} \and Department of Mathematics and Statistics, University of Turku, Turku, Finland \email{iatorm@utu.fi}}
\begin{document}

\maketitle
\begin{abstract}
We define sets of coulourings of the infinite discrete plane using monadic second order (MSO) formulas. We determine the complexity of deciding whether such a formula defines a subshift, parametrized on the quantifier alternation complexity of the formula. We also study the complexities of languages of MSO-definable sets, giving either an exact classification or upper and lower bounds for each quantifier alternation class.

\keywords{MSO logic \and subshifts \and symbolic dynamics \and tilings}
\end{abstract}

\section{Introduction}

A \emph{tiling} or \emph{configuration} is a colouring of the two-dimensional plane $\Z^2$ with finitely many colours. A \emph{subshift} is the set of tilings such that no pattern from a set of \emph{forbidden patterns} appears. These notions were first introduced by Wang \cite{Wang} to study first order logic, and to find an algorithm which computes whether a formula is a tautology or not. Berger proved in 1966 \cite{Berger} the undecidability of the domino problem, and many other problems were proved undecidable too thanks to this result in the following decades.

This article continues the line of research in \cite{j-t,torma}. The idea is to define sets of configurations using \emph{monadic second order} (\emph{MSO} for short) logic.
MSO definability is a classical and broad area of research; see \cite{Wo97} for an overview in the context of finite and infinite languages.
MSO formulas on finite words correspond to regular languages, and those on infinite words correspond to $\omega$-regular languages.
Subshifts can be seen as languages of two-dimensional infinite words, so it makes sense to ask which of them can be defined by MSO formulas.
In the formalism of \cite{j-t}, an MSO formula always defines a translation invariant set, but if it has a first order existential quantifier, it may not define a topologically closed set, i.e.\ a subshift. This is why \cite{j-t,torma} mostly restrict their study to the case where the MSO formulas do not have first order existential quantifiers. In this article, we study the case where they are allowed.

Our first result is to determine the exact complexity of deciding whether an MSO formula defines a subshift or not, depending on the complexity (number of second-order quantifier alternations) of the formula. We also study the complexity of languages of subshifts and sets these formulas can define.

One motivation for this research is \emph{Griddy} (formerly called Diddy), a Python library for discrete dynamical systems research developed by Ville Salo and the second author \cite{diddyrepo,diddypaper}.
Griddy allows the definition and manipulation of multidimensional subshifts defined by first order logical formulas.
Even though most properties of multidimensional subshifts are undecidable, partial algorithms can often give good solutions or approximations in ``naturally occurring'' cases.
This raises the question whether the same holds for subshifts defined by MSO formulas, or whether the complexity is too high to obtain even partial algorithms.

\section{Preliminaries}
\subsection{Subshifts}

We review some basic facts from multidimensional symbolic dynamics.
See \cite{LiMa21}, especially Section A.6, for a more comprehensive overview.

Let $\A$ be a finite set of colours called \emph{alphabet}. A \emph{pattern} is a map $P:D \rightarrow \A$ where $D \subset \Z^2$. $D$ is called \emph{support} of the pattern and is denoted $D(P)$. A \emph{configuration} is a pattern $x$ such that $D(x)=\Z^2$. We say that a pattern $P$ \emph{appears} in a configuration $x \in \A^{\Z^2}$ if there exists $\vec{v} \in \Z^2$ such that $P_{\vec{u}}=x_{\vec{u}+\vec{v}}$ holds for all $\vec{u} \in \supp(P)$. The number of occurrences of $P$ in $x$ (which may be infinite) is denoted $\#_x(P)$. A \emph{cell} or \emph{position} is an element $\vec{v} \in \Z^2$.

For a set of patterns $\F$, we define $X_{\F} \subseteq \A^{\Z^2}$ as the set of those $x \in \A^{\Z^d}$ in which no pattern from $\F$ appears. Such a set is called a \emph{subshift}. When $\F$ is finite, $X_{\F}$ is a \emph{subshift of finite type (SFT)}.

\begin{property}\label{computationtiles}
    For each Turing machine $M$, there exists a SFT (see Figure \ref{fig:computationtiles} for the colours, a pattern of size $1 \times 2$ or $2 \times 1$ is forbidden if the tiles do not match), such that the \emph{initial tile} may appear in a tiling if and only if $M$ does not halt on the empty input. There exists a slightly different version for which an input is written on the computation tape. In this case, if the initial tile appears on a configuration with a finite input $x$ written on it, $M$ does not halt on $x$.
\end{property}

\begin{figure}[h]
    \centering
\begin{tikzpicture}[scale=1.5]
\fill[gray!40] (0,0) -- (1,0) -- (1,1) -- (0,1) -- cycle;

\draw (1.5,0) -- (1.5,1) -- (2.5,1) -- (2.5,0) -- cycle;
\fill[gray!40] (1.5,0) -- (1.5,1)  -- (1.75,1) -- (1.75,0) -- cycle;

\draw (3,0) -- (3,1) -- (4,1) -- (4,0) -- cycle;
\fill[gray!40] (3,0) -- (3,0.25)  -- (4,0.25) -- (4,0) -- cycle;
\draw (3,0.5) -- (3.25,0.5) -- (3.25,1);
\draw (3.5,1) node[above] {\tiny $B$};

\draw (0,-1.5) -- (0,-0.5) -- (1,-0.5) -- (1,-1.5) -- cycle;
\fill[gray!40] (0,-1.5) -- (0,-1.25)  -- (1,-1.25) -- (1,-1.5) -- cycle;
\draw (0,-1) -- (0.25,-1) -- (0.25,-0.5);
\draw (0,-1) node[left] {\tiny $q_0$};
\draw (0.25,-0.5) node[above] {\tiny $q_0,B$};

\draw (3,-1.5) -- (4,-1.5) -- (4,-0.5) -- (3,-0.5) -- cycle;
\draw (3.4,-1.1) -- (3.4,-0.9) -- (3.6,-0.9) -- (3.6,-1.1) -- cycle;
\draw (3.5,-1) node {$a$};
\draw (3.5,-0.5) node[above] {\tiny $a$};
\draw (3.5,-1.5) node[below] {\tiny $a$};

\draw (1.5,-1.5) -- (2.5,-1.5) -- (2.5,-0.5) -- (1.5,-0.5) -- cycle;
\draw (1.9,-1.1) -- (1.9,-0.9) -- (2.1,-0.9) -- (2.1,-1.1) -- cycle;
\draw (2,-1) node {$a$};
\draw (2.5,-0.75) -- (1.75,-0.75) -- (1.75,-0.5);
\draw (1.75,-0.5) node[above] {\tiny $q,a$};
\draw (2.5,-0.75) node[right] {\tiny $q,\leftarrow$};
\draw (2,-1.5) node[below] {\tiny $a$};

\draw (4.5,0) -- (5.5,0) -- (5.5,1) -- (4.5,1) -- cycle;
\draw (4.9,0.4) -- (4.9,0.6) -- (5.1,0.6) -- (5.1,0.4) -- cycle;
\draw (5,0.5) node {$a$};
\draw (4.5,0.75) -- (4.75,0.75) -- (4.75,1);
\draw (4.75,1) node[above] {\tiny $q,a$};
\draw (4.5,0.75) node[left] {\tiny $q,\rightarrow$};
\draw (5,0) node[below] {\tiny $a$};

\draw (6,-1.5) -- (7,-1.5) -- (7,-0.5) -- (6,-0.5) -- cycle;
\draw (6.4,-1.1) -- (6.4,-0.9) -- (6.6,-0.9) -- (6.6,-1.1) -- cycle;
\draw (6.5,-1) node {$a$};
\draw (6.25,-1) circle (0.1);
\draw (6.25,-1) node {\small $q$};
\draw[->] (6.25,-1.5) -- (6.25,-1.1);
\draw (6.25,-0.9) -- (6.25,-0.75) -- (7,-0.75);
\draw (6.5,-0.5) node[above] {\tiny $a'$};
\draw (6.25,-1.5) node[below] {\tiny $q,a$};
\draw (7,-0.75) node[right] {\tiny $q',\rightarrow$};

\draw (6.7,-1.7) node {\tiny If $\delta(q,a)=(q',a',\rightarrow)$};

\draw (6,0) -- (6,1) -- (7,1) -- (7,0) -- cycle;
\fill[gray!40] (6,0) -- (6,1) -- (6.25,1) -- (6.25,0.25) -- (7,0.25) -- (7,0) -- cycle;
\draw (6.5,0.5) -- (7,0.5);
\draw (7,0.5) node[right] {\tiny $q_0$};
\draw (6.5,0) node[below] {\tiny \textit{Initial tile}};

\draw (4.5,-1.5) -- (5.5,-1.5) -- (5.5,-0.5) -- (4.5,-0.5) -- cycle;
\draw (4.9,-1.1) -- (4.9,-0.9) -- (5.1,-0.9) -- (5.1,-1.1) -- cycle;
\draw (5,-1) node {$a$};
\draw (4.75,-1) circle (0.1);
\draw (4.75,-1) node {\small $q$};
\draw[->] (4.75,-1.5) -- (4.75,-1.1);
\draw (4.75,-0.9) -- (4.75,-0.75) -- (4.5,-0.75);
\draw (5,-0.5) node[above] {\tiny $a'$};
\draw (4.75,-1.5) node[below] {\tiny $q,a$};
\draw (4.5,-0.75) node[left] {\tiny $q',\leftarrow$};

\draw (4.8,-1.7) node {\tiny If $\delta(q,a)=(q',a',\leftarrow)$};
\draw (5.75,-1.9) node {\tiny for $a \in \Sigma, \ q \in Q$};
\end{tikzpicture}
    \caption{Computation tiles}
    \label{fig:computationtiles}
\end{figure}
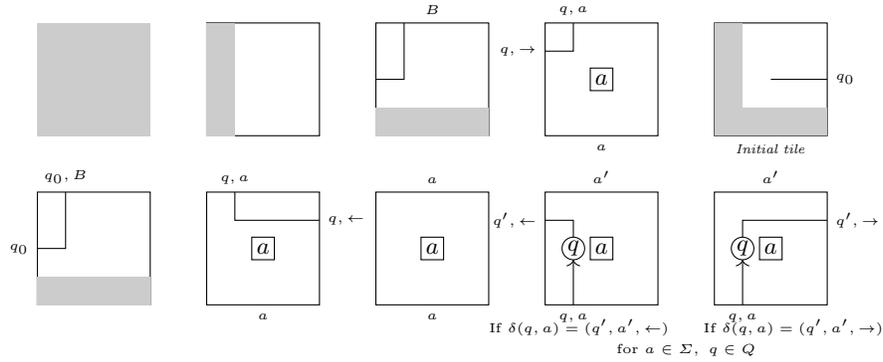

For $\vec{v} \in \Z^2$ and $x \in \A^{\Z^2}$, denote by $\sigma_{\vec{v}}(x)$ the configuration $\sigma_{\vec{v}}(x)_{\vec{u}}=x_{\vec{u}+\vec{v}}$ for all $\vec{u} \in \Z^2$. The function $\sigma_{\vec{v}} : \A^{\Z^2} \to \A^{\Z^2}$ is called the \emph{shift by $\vec{v}$}. A set of configurations $E$ is \emph{shift-invariant} if $\sigma_{\vec{v}}(x) \in E$ for all $x \in E$ and $\vec{v} \in \Z^2$.

We give a distance function $d$ on the space $\A^{\Z^2}$. For two configurations $x \neq y$, let $d(x,y)=2^{-\mine\{|a|+|b| \mid x_{(a,b)} \neq y_{(a,b)}\}}$ and $d(x,x)=0$ for all $x$. The set $\A^{\Z^2}$ is therefore a metric space and has a corresponding topology, which is generated by the clopen \emph{cylinder sets}
$[P] = \{ x \in \A^{\Z^2} \mid \forall \vec{v} \in \supp(P) \ x_{\vec{v}} = P_{\vec{v}} \}$
for finite patterns $P$.

\begin{property}
A set of configurations $X$ is a subshift if and only if it is shift-invariant and topologically closed.
\end{property}

The \emph{language} $\lc(X)$ of a set of configurations $X$ is the set of finite patterns $P$ such that $P$ appears in at least one $x \in X$.

\begin{property}
If $X$ and $Y$ are subshifts such that $\lc(X)=\lc(Y)$, then $X=Y$.
\end{property}

\subsection{Computability}
A decision problem is a predicate with some free variables (first or second order). The \emph{arithmetic hierarchy} for decision problems is defined inductively as follows:
\begin{itemize}
	\item $\Sigma_0^0$ and $\Pi^0_0$ are the set of decidable problems.
	\item $R$ is in $\Pi_{n+1}^0$ if there exists $S\in \Sigma_n^0$ such that $R \Leftrightarrow \forall x \ S(x)$.
    \item $R$ is in $\Sigma_{n+1}^0$ if there exists $S\in \Pi_n^0$ such that $R\Leftrightarrow \exists x \ S(x)$.
\end{itemize}

Here, quantifiers range over natural numbers. We define in a simililar way the analytical hierarchy:

\begin{itemize}
	\item $\Sigma_0^1$ and $\Pi^1_0$ are the set of problems in the arithmetical hierarchy.
	\item $R$ is in $\Pi_{n+1}^1$ if there exists $S\in \Sigma_n^1$ such that $R\Leftrightarrow \forall_1 X \ S(X)$.
    \item $R$ is in $\Sigma_{n+1}^1$ if there exists $S\in \Pi_n^1$ such that $R(x)\Leftrightarrow \exists_1 X \ S(X)$.
\end{itemize}
Here, $\forall_1$ and $\exists_1$ quantify over sets of natural numbers. We often write simply $\exists$ and $\forall$ with a capital letter variable when the context is clear.

We use \emph{many-one reductions} to compare the computational complexity of decision problems. We denote $Q \leq P$ is there exists a computable function $f:\N \rightarrow \N$ such that $Q(x) \Leftrightarrow P(f(x))$ for all inputs $x$. For a class $C$ and a problem $P$, if $Q \le P$ holds for all $Q \in C$, then $P$ is $C$-\emph{hard}. If $P$ is in $C$ and $C$-hard, we say that it is $C$-\emph{complete}.

\subsection{MSO Logic}

We now review the formalism used in \cite{j-t,torma} for defining subshifts (and, more generally, shift-invariant sets of configurations) using logical formulas.
Let $\A$ be a finite set of colours.
A \emph{term} is either a first order variable or one of $\mathrm{East}(t)$, $\mathrm{West}(t)$, $\mathrm{North}(t)$ and $\mathrm{South}(t)$, where $t$ is a term.
An \emph{atomic formula} is $t = t'$ or $P_c(t)$, where $t$ and $t'$ are terms and $c \in \A$ is a colour.
A \emph{monadic second order} (\emph{MSO} for short) formula is constructed from terms using the logical connectives ${\neg}$, ${\wedge}$ and ${\vee}$, plus first and second order quantification.
In general, we will use uppercase letters for second order variables and lowercase letters for first order variables.
Second order quantification is sometimes denoted $\forall_1$ and $\exists_1$ for clarity.
The first order variables will range over $\Z^2$ and the second order variables over subsets of $\Z^2$.
The \emph{radius} of a formula is the maximal depth of terms $t$ in it. A formula is \emph{closed} if all variables inside the formula are bound by a quantifier. It is \emph{first order} (\emph{FO} for short) if it contains no second order quantifier.

\begin{remark}
We concentrate on MSO formulas of the form $Q_1 X_1 Q_2 X_2 \cdots Q_n X_n \psi$,
where the $Q_i$ are second order quantifiers and $\psi$ is first order, in order to discuss its number of second order quantifier alternations.
A formula with mixed quantifier order can always be put into this form by replacing each out-of-order first order variable with a second order variable constrained to be a singleton. 
For example, $\forall x \ \exists Y \ \psi$ with $\psi$ first order becomes $\forall X \ \exists Y \ \lnot (\exists n \ \forall m \ m \in X \Leftrightarrow m=n) \lor (\exists x \ x \in X \land \psi)$.
\end{remark}

We define a hierarchy called \emph{MSO hierarchy} of the complexity of formulas:
\begin{itemize}
    \item $\Sb{0}$ and $\Pb{0}$ are the set of first order formulas.
    
    \item $\phi \in \Sb{n+1}$ if it is of the form $\exists X_1 \ \hdots \exists X_n \ \psi$ with $\psi \in \Pb{n}$.

    \item $\phi \in \Pb{n+1}$ if it is of the form $\forall X_1 \ \hdots \forall X_n \ \psi$ with $\psi \in \Sb{n}$.
\end{itemize}

For any configuration $x \in \A^{\Z^2}$, we consider the model $\M_x=(\Z^2, \tau)$, whose domain is $\Z^2$ and whose signature $\tau$ contains
\begin{itemize}
    \item four unary functions $\east$, $\west$, $\north$ and $\south$, whose interpretations are $\east^{\M_x}(a,b)=(a+1,b)$,$\west^{\M_x}(a,b)=(a-1,b)$, $\north^{\M_x}(a,b)=(a,b+1)$ and $\south^{\M_x}(a,b)=(a,b-1)$, and
    \item a unary predicate symbol $P_c$ for each $c \in \A$, with the interpretation $P_c^{\M_x}(\vec{v})=\top$ if and only if $x_{\vec{v}}=c$.
\end{itemize}
For a formula $\phi$, we write $x \vDash \phi$ if $\M_x \vDash \phi$, that is, $\phi$ holds in the model $\M_x$.
For each MSO formula $\phi$, denote $X_{\phi}=\{x \in \A^{\Z^2} \mid x \vDash \phi\}$. The set $X_{\phi}$ is always shift-invariant but not always closed, and so not always a subshift (see Example \ref{exnonsub}). For a set of configurations $X$, if $X=X_{\phi}$ for some $\phi$, we say that $X$ is \emph{MSO-definable}, and \emph{$C$-definable} for a class of formulas $C$ if $\phi \in C$.

Every SFT is definable by an MSO formula of the form $\forall v \ \psi$, where $\psi$ is quantifier-free.
This is essentially the format used by Griddy \cite{diddyrepo}, with syntactic support for easily defining complex quantifier-free formulas.

If $X_{\phi}=X_{\psi}$, we say $\phi$ and $\psi$ are \emph{equivalent}. If $Q$ is a quantifier, then $\ob{Q}=\forall$ if $Q=\exists$, $\ob{Q}=\exists$ otherwise. We say that a quantifier $Q$ in a formula $\phi$ is \emph{irrelevant} if $\phi$ is equivalent to the formula where that quantifier has been replaced by $\ob{Q}$.

Our second order variables range over subsets of $\Z^2$.
It is equivalent to quantify over configurations in the following sense.
Consider a formula
$\exists X \in X_{\phi} \ \psi$,
where $X$ ranges over configurations of $X_{\phi}$ and $\psi$ may contain subformulas $P_c(X(t))$ with the interpretation that the term $t$ is coloured with the colour $c$ in the configuration $X$.
This formula be rewritten as an equivalent $\Sb{n}$ formula if $\phi, \psi \in \Sb{n}$, and analogously for $\Pb{n}$, by replacing $X$ with $|\A|$ subsets of $\Z^2$, one for each color, with the requirement that the sets form a partition of $\Z^2$. Thus, we allow quantification over configurations since it is often more convenient.
Furthermore, we may restrict any such quantification to be over a set of configurations defined by a first order formula $\psi$ without changing the second-order quantifier complexity: $\forall_1 X \in X_\psi \ \phi$ can be implemented as $\forall_1 X (\neg\psi(X) \vee \phi)$, and $\exists_1 X \in X_\psi \ \phi$ as $\exists_1 X (\psi(X) \wedge \phi)$, after which the quantifiers of $\phi$ can be moved over $\psi(X)$ to bring it to prenex normal form.


\begin{example}
  \label{ex:finite}
We define $\mathrm{fin}(E)$, a $\Sb{1}$ formula with a free second order variable $E$ which is true if and only if the set $E \subset \Z^2$ is finite.
The formula is $\mathrm{fin}(E)\ed \exists N \exists S \ \phi_1 \land \phi_2 \land \phi_3 \land \phi_4 \land \phi_5$, where
\begin{itemize}
    \item $\phi_1 \ed \forall v \ (v \in N \Leftrightarrow \west(v) \in N \land \south(v) \in N)$

    \item $\phi_2 \ed \forall v \ (v \in S \Leftrightarrow \east(v) \in S \land \north(v) \in S)$

    \item $\phi_3 \ed \exists v \ v \in N \land \lnot \north(v) \in N \land \lnot \east(v) \in N$

    \item $\phi_4 \ed \exists v \ v \in S \land \lnot \south(v) \in S \land \lnot \west(v) \in S$

    \item $\phi_5 \ed \forall v \ v \in E \Rightarrow v \in N \land v \in S$
\end{itemize}
The idea is that $N \subset \Z^2$ is a nondegenerate southwest quarter-plane, $S \subset \Z^2$ is a nondegenerate northeast quarter-plane, and $E$ is contained in their intersection.
\end{example}

\begin{example}
We define $\mathrm{Stable}(X) = \forall n \ n \in X \Leftrightarrow \east(n) \in X$ where $X$ is a second order variable. The predicate $\mathrm{Stable}(X)$ is valuated to true if, whenever an element is in $X$, then all the elements of the same line are in $X$ too. We define also $\mathrm{Line}(X,Y)$ as the formula
\[
\mathrm{Stable}(X) \land ((\mathrm{Stable}(Y) \land (\exists n \ n \in X \land n \in Y)) \Rightarrow (\forall n \ n \in X \Rightarrow n \in Y)).
\]
Then $\forall Y \ \mathrm{Line}(X,Y)$ is true if and only if $X$ is exactly the set of elements of a line, or is empty. Finally, define the formula $\phi=\exists X \ \forall Y \ \mathrm{Line}(X,Y) \land (\forall n \ P_{\bsq}(n) \Rightarrow n \in X)$.
The subshift $X_{\phi}$ consists of exactly those configurations where every symbol $\bsq$ appears in the same line.
\end{example}

\begin{example}\label{exnonsub}
Define the formula $\phi=\exists x \ \forall y \ P_{\bsq}(y) \Leftrightarrow y=x$. Then $X_{\phi}$ is the set of configurations where there is exactly one position colored $\bsq$. It is not a subshift since it is not topologically closed. Indeed, there is a sequence of configurations in $X_{\phi}$ which converges to the all-$\square$ configuration, which is not in $X_{\phi}$.
\end{example}

\section{First order formulas}

In this section, we analyze the structure of FO-definable sets.
The main result is that they are exactly the sets that can be defined by restricting the number of occurrences of a finite number of finite patterns up to some finite upper bound.
Our techniques come from \cite{rect}, where similar results were proved in the context of \emph{picture languages}.

\begin{definition}
Let $k \geq 0$.
For $0 \le a \le k$ and a finite pattern $P$ on an alphabet $\A$, define $S^k_a(P) \subset \A^{\Z^2}$ as
the set of configurations where $P$ appears exactly $a$ times if $a<k$, or at least $k$ times if $a=k$.

Let also $n \geq 0$.
For a function $f : \A^{n \times n} \to [0, k]$, define $C_f = \bigcap_P S^k_{f(P)}(P)$.
Then $\{ C_f \mid f : \A^{n \times n} \to [0, k] \}$ is a partition of $\A^{\Z^2}$ into equivalence classes, and if two configurations $x, y \in \A^{\Z^2}$ are in the same class, we denote $x \sim_{(n,k)} y$.
\end{definition}

In other words, $x \sim_{(n,k)} y$ means that each $n \times n$ pattern either occurs exactly $t < k$ times in both $x$ and $y$, or occurs at least $k$ times in both.

\begin{lemma}\label{equivclass}
Let $\phi$ be a first order formula. There exists a couple $(n,k) \in \N^2$ such that $X_{\phi}$ is a union of $\sim_{(n,k)}$-equivalence classes. Moreover, this union is computable from $\phi$.
\end{lemma}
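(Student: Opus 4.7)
The plan is to use an Ehrenfeucht--Fraïssé game argument, adapted to $\Z^2$ in the style of Hanf's locality theorem. Let $q$ be the quantifier depth of $\phi$ and $r$ its radius (maximum term nesting depth). I would take $k = q+1$ and $n$ large enough (e.g.\ $n = (2q+1)(2r+1)$) that an $n \times n$ pattern encodes all locally-relevant information for $q$ rounds of the game; both are computable from the syntactic data of $\phi$.

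By the Ehrenfeucht--Fraïssé theorem, it suffices to show that whenever $x \sim_{(n,k)} y$, Duplicator wins the $q$-round EF game on $(\M_x, \M_y)$, from which $x \models \phi \Leftrightarrow y \models \phi$ follows and hence $X_\phi$ is a union of $\sim_{(n,k)}$-equivalence classes. I would define Duplicator's strategy by induction on the number of remaining rounds, maintaining the invariant that after round $i$, the chosen positions $a_1, \ldots, a_i$ in $x$ and $b_1, \ldots, b_i$ in $y$ have pairwise-matching local neighborhoods at a scale $r_i$ that decreases as $i$ grows. When Spoiler picks a new position in one of the configurations: if it lies close to a previously chosen $a_j$, Duplicator copies the displacement to $b_j$; if it is isolated from all previous choices, Duplicator uses the $\sim_{(n,k)}$-equivalence to find a matching isolated occurrence of the $n \times n$ pattern around it, either because the pattern appears equally often in both configurations (below threshold $k$) or because it appears at least $k > i$ times in each (above threshold, giving enough occurrences to avoid the previously chosen $b_j$).

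The main obstacle is the isolated case of the quantifier step: the chosen occurrence in $y$ must not only have matching $n \times n$ content, but also be far enough from the previously chosen $b_j$ so that the invariant propagates through the remaining rounds. Handling this requires choosing $n$ strictly larger than the maximum neighborhood radius inspected by any remaining round, a standard ``growing neighborhood'' bookkeeping argument from Hanf-locality proofs; this is what forces $n$ to depend on $q$ and not merely on $r$. Together the chosen $n$ and $k$ suffice for the induction, and every step of the construction is uniformly computable from $\phi$, giving the desired union of equivalence classes.
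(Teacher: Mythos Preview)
Your approach via Hanf locality and Ehrenfeucht--Fra\"iss\'e games is sound for the \emph{existence} part of the lemma and is genuinely different from the paper's proof. The paper does not use EF games at all: it puts $\phi$ in prenex normal form with radius~$1$, encodes each first-order variable as an extra $\{0,1\}$-layer of the alphabet, and then walks the quantifier prefix from the inside out, maintaining at each stage an explicit finite union of equivalence classes over the extended alphabet. The passage from one quantifier to the next is handled by a combinatorial ``extension'' relation between equivalence classes, which is the technical heart of their argument. Your route is more conceptual and shorter for the invariance statement; theirs is more hands-on but yields the algorithm for free.

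Two points deserve attention. First, the example value $n=(2q+1)(2r+1)$ is almost certainly too small: in the standard Hanf/EF argument the matching radius must roughly halve at each round (when Spoiler plays near a previous pebble, Duplicator copies the offset and loses half the guaranteed agreement), so the initial radius must be of order $r\cdot 2^{q}$, i.e.\ exponential in $q$. You do say ``large enough,'' so this is a parameter slip rather than a structural error, but it should be corrected.

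Second, and more seriously, the \emph{computability} claim is where the paper places the emphasis (they note that the invariance part already appears in earlier work), and your proposal does not actually establish it. Showing that Duplicator wins whenever $x\sim_{(n,k)}y$ proves that $X_\phi$ is a union of $\sim_{(n,k)}$-classes; it does not tell you \emph{which} classes. Saying ``every step of the construction is uniformly computable'' conflates computing $n,k$ and the strategy with computing the union: to finish, you must decide for each function $f:\A^{n\times n}\to\{0,\dots,k\}$ whether $C_f\subseteq X_\phi$, and nothing in the EF argument gives this directly. You would need to invoke the effective Hanf normal form (that $\phi$ is \emph{computably} equivalent to a Boolean combination of threshold-counting sentences), which is true but is an additional theorem you have not stated. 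The paper's inductive construction avoids this issue entirely, since each step manipulates a finite set of equivalence classes by explicit finite operations, so the final union $\C_0$ is produced by the algorithm itself.
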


We stress that apart from the computability of the union, this result essentially appeared already in \cite{j-t,rect}.

\begin{corollary} \label{fofonda}
Let $\phi$ be a first order formula. We can compute two formulas of the form $\exists x_1 \ \hdots \ \exists x_n \ \forall y_1 \ \hdots \ \forall y_m \ \psi_1$ and $\forall x_1 \ \hdots \ \forall x_n \ \exists y_1 \ \hdots \ \exists y_m \ \psi_2$ with $\psi_1$ and $\psi_2$ quantifier-free which are equivalent with $\phi$.
\end{corollary}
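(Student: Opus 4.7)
The plan is to combine Lemma~\ref{equivclass} with explicit first-order encodings of ``exactly $a$ occurrences'' and ``at least $k$ occurrences''. By Lemma~\ref{equivclass}, we effectively compute $(n,k)$ and a finite list of functions $f_1,\dots,f_r : \A^{n\times n}\to\{0,\dots,k\}$ such that $X_\phi = \bigcup_{i=1}^r C_{f_i}$. For each $n\times n$ pattern $P$ we have a quantifier-free formula $A_P(v)$ of radius $n$ asserting that $P$ occurs at $v$, obtained as the conjunction of atoms $P_c(t)$ where $c = P(\vec{u})$ and $t$ is the term reaching $v+\vec{u}$ through compositions of $\east,\west,\north,\south$. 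Then ``$P$ appears at least $a$ times'' becomes $\exists x_1\cdots\exists x_a\, \bigl(\bigwedge_{i} A_P(x_i) \wedge \bigwedge_{i<j} x_i \neq x_j\bigr)$, and ``$P$ appears exactly $a$ times'' further adds the clause $\forall y\,(A_P(y) \Rightarrow \bigvee_i y = x_i)$. Each such cardinality condition therefore has prenex shape $\exists^* \forall^*(\text{quantifier-free})$.

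The key closure step is that formulas of shape $\exists^* \forall^*(\text{quantifier-free})$ are closed under both conjunction and disjunction, using the equivalences $(Q\vec{x}\,A) \star (Q\vec{x}'\,B) \equiv Q\vec{x}\,Q\vec{x}'\,(A \star B)$ for $Q \in \{\exists,\forall\}$ and $\star \in \{\wedge,\vee\}$, after renaming variables so the two blocks are disjoint. Only the instance $(\forall\vec{y}\,A) \vee (\forall\vec{y}'\,B) \equiv \forall\vec{y}\,\forall\vec{y}'\,(A \vee B)$ is not entirely routine, but it is easily verified: if some $\vec{y}_0$ falsifies $A$, then the right-hand side forces $B$ to hold universally. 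Applying these rules, we first assemble each $C_{f_i}$ as a conjunction of the clauses above and then take $\bigvee_i C_{f_i}$, producing a formula $\exists x_1 \cdots \exists x_n\,\forall y_1 \cdots \forall y_m\,\psi_1$ equivalent to $\phi$. All manipulations are purely syntactic and preserve computability.

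For the second normal form, observe that the complement $X_{\neg\phi}$ is the union of the remaining $\sim_{(n,k)}$-equivalence classes, and this union is also computable from $\phi$. Running the previous construction on $\neg\phi$ yields an equivalent $\exists\vec{x}\,\forall\vec{y}\,\chi$ with $\chi$ quantifier-free; negating it gives the desired $\forall\vec{x}\,\exists\vec{y}\,\neg\chi$ for $\phi$, and we set $\psi_2 = \neg\chi$. The main obstacle, beyond invoking Lemma~\ref{equivclass}, is verifying the dual distributivity $(\forall\vec{y}\,A)\vee(\forall\vec{y}'\,B)\equiv\forall\vec{y}\,\forall\vec{y}'\,(A\vee B)$; once that is in hand, everything else is bookkeeping and careful tracking of variable renamings.
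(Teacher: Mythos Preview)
Your proof is correct and follows essentially the same approach as the paper: invoke Lemma~\ref{equivclass} to write $X_\phi$ as a finite Boolean combination of sets $S^k_a(P)$, express each of these by an $\exists^*\forall^*$ formula, and combine using closure of that class under conjunction and disjunction. The only minor variation is that the paper obtains the $\forall^*\exists^*$ form by writing each $S^k_a(P)$ directly in that form as well (which is easy since ``exactly $a$'' is a conjunction of an $\exists^*$ clause and a $\forall^*$ clause, and ``at least $k$'' is purely $\exists^*$), whereas you pass through the complement $\neg\phi$; both routes are equally short. Incidentally, the equivalence $(\forall\vec{y}\,A)\vee(\forall\vec{y}'\,B)\equiv\forall\vec{y}\,\forall\vec{y}'\,(A\vee B)$ that you flag as ``not entirely routine'' is in fact just two applications of the standard prenex rule $\forall y\,C \vee D \equiv \forall y\,(C\vee D)$ for $y$ not free in $D$, so no separate verification is needed.
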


\begin{proof}
For all $n, k \in \N$, $0 \ge a \ge k$ and $P$ pattern of size $n \times n$, $S_a(P)$ is first-order definable by formulas of the form $\exists x_1 \ \hdots \ \exists x_n \ \forall y_1 \ \hdots \ \forall y_m \ \psi_1$ and of the form $\forall x_1 \ \hdots \ \forall x_n \ \exists y_1 \ \hdots \ \exists y_m \ \psi_2$, with $\psi_1$ and $\psi_2$ quantifier-free, and these formulas are computable from $S_a(P)$. It follows that it is also the case for every first-order formula using Lemma \ref{equivclass}.
\end{proof}

\section{MSOSUB}
\label{sec:msosub}

In this section, we are interested in the following problems:

\begin{definition}[MSOSUB]
Let $C$ be a class of MSO formulas.
We define the following problem, called $C$-SUB: given a formula $\phi \in C$, is $X_{\phi}$ a subshift?
\end{definition}

If $C$ is $\Sb{0}$ or $\Pb{0}$, we prefer to call this problem $\MFOSUB$.

\subsection{FOSUB}

\begin{theorem}
$\MFOSUB$ is $\Pi^0_4$-complete.
\end{theorem}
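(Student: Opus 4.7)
The plan is to split the proof into a $\Pi^0_4$ upper bound and a matching hardness reduction, both starting from Lemma \ref{equivclass}: from $\phi$ we can effectively compute $n$, $k$ and the finite set $\mathcal{F}_\phi$ of functions $f : \A^{n \times n} \to \llbracket 0, k\rrbracket$ such that $X_\phi = \bigsqcup_{f \in \mathcal{F}_\phi} C_f$.

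For the upper bound, since the $C_f$ partition $\A^{\Z^2}$, $X_\phi$ is a subshift iff $C_f \cap \overline{X_\phi} = \emptyset$ for every $f \notin \mathcal{F}_\phi$, a finite conjunction of conditions. Each class splits as $C_f = O_f \cap K_f$, where $O_f$ is the conjunction of the ``at least $f(Q)$ occurrences of $Q$'' predicates (a finite intersection of effective open sets) and $K_f$ is the conjunction of the ``at most $f(Q)$ occurrences of $Q$'' predicates for $f(Q) < k$ (effectively closed, i.e.\ $\Pi^0_1$). I would first show $\lc(X_\phi) = \bigcup_{f \in \mathcal{F}_\phi} \lc(C_f)$ is $\Sigma^0_2$ uniformly in $\phi$: a pattern $P$ lies in $\lc(C_f)$ iff some finite pattern $V \supseteq P$ already realizes the ``at least'' constraints of $O_f$ and $[V] \cap K_f \ne \emptyset$, the latter being $\Pi^0_1$ by K\"onig's lemma on the effectively closed $[V] \cap K_f$. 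Next, after fixing such a $V$ realizing $O_f$, the remaining set $[V] \cap K_f \cap \overline{X_\phi}$ is a decreasing intersection of clopens, so by compactness $C_f \cap \overline{X_\phi} \ne \emptyset$ is equivalent to ``$\exists V$ realizing $O_f$ such that for every $N \ge |V|$ some $N \times N$ pattern $W \supseteq V$ respects the $K_f$ bounds and has every subpattern in $\lc(X_\phi)$''. Unfolding the $\Sigma^0_2$ predicate $\lc(X_\phi)$ gives $\exists V\,\forall N\,\exists W\,\exists a\,\forall b\, R$ with $R$ decidable, i.e.\ $\Sigma^0_4$; negating and conjoining finitely over $f \notin \mathcal{F}_\phi$ puts $\MFOSUB$ in $\Pi^0_4$.

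For $\Pi^0_4$-hardness, I would reduce from a fixed $\Pi^0_4$-complete index set of the shape $\{e \mid \forall n\,\exists m\,\forall k\,\exists \ell\, R(e,n,m,k,\ell)\}$ with $R$ decidable, encoded via the computation-tile SFTs of Property \ref{computationtiles}. The target formula $\phi_e$ should be a first-order sentence whose configurations carry parallel simulated computations indexed by the quantified parameters, together with ``flag'' symbols whose first-order specification forces the appropriate witnesses to exist. The construction should be set up so that a configuration omitting a needed witness lies in $\overline{X_{\phi_e}} \setminus X_{\phi_e}$ precisely when the corresponding instance of the $\Pi^0_4$ predicate fails, so that $X_{\phi_e}$ is a subshift iff $e$ belongs to the target set.

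The main difficulty is expected to be the hardness reduction: aligning each of the four alternations of the target predicate with the four ``layers'' exposed by the upper bound (the finite choice of $f \notin \mathcal{F}_\phi$, the finite witness $V$, the size $N$, and the inner $\Sigma^0_2$ complexity of $\lc(X_\phi)$) requires a careful tiling-and-formula construction that keeps $\phi_e$ first order. The upper bound itself is, by contrast, largely complexity book-keeping once $\lc(X_\phi) \in \Sigma^0_2$ is established and compactness is used to replace the configuration quantifier by a quantifier over finite patterns.
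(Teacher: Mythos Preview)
Your upper bound is correct and in fact cleaner than the paper's. Both start from Lemma~\ref{equivclass}, but the paper works ``from the inside'': it builds a $\Sigma^0_4$ formula describing a sequence in some $C_i \subseteq X_\phi$ whose $n\times n$ patterns with bounded count escape to infinity so that no limit point lies in any $C_j$. You work ``from the outside'': non-closedness is $\exists f\notin\mathcal{F}_\phi\,(C_f\cap\overline{X_\phi}\neq\emptyset)$, and you arithmetize $C_f\cap\overline{X_\phi}\neq\emptyset$ directly by writing $C_f=O_f\cap K_f$, observing that $[V]\cap K_f\cap\overline{X_\phi}$ is a decreasing intersection of clopen sets indexed by the window size $N$, and invoking compactness. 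The key step, that $\mathcal{L}(X_\phi)$ is $\Sigma^0_2$ uniformly in $\phi$, is exactly Theorem~\ref{folim}'s companion upper bound in the paper and is proved just as you sketch. Your route avoids the paper's somewhat delicate ``map'' machinery $(C,\ell,M_1,M_2)$ entirely; the price is that the match between the four quantifier layers and their semantic meaning is less explicit, which matters for the hardness side.

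The hardness half, however, is not a proof but a statement of intent. You correctly identify that this is where the difficulty lies, but ``parallel simulated computations indexed by the quantified parameters, together with flag symbols'' does not yet say how a \emph{first-order} formula (hence one whose models are unions of $\sim_{(n,k)}$-classes) can encode four genuine quantifier alternations. The paper's reduction is from the concrete $\Pi^0_4$-complete problem $\forall\mathrm{COF}$ and hinges on a specific SFT (Figure~\ref{fig:mfohard}): a diagonal signal carries a finite input $v$, erased when it meets a second signal marking the end of a second input $w$, with a simulated run of $M(v,w)$ above. The FO conjunct ``if tiles~1 and~2 both occur then tile~3 occurs'' forces finiteness of $w$ whenever $v$ is finite. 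The point is that failure of closedness corresponds exactly to the existence of a $v$ with infinitely many non-halting $w$'s (limit points lose tile~3 while keeping tiles~1 and~2). Your proposal does not supply any mechanism of this kind, and without it the hardness direction is missing its main idea.
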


We split the proof into Lemmas \ref{mfoin} and \ref{mfohard}.



\begin{lemma} \label{mfoin}
$\MFOSUB$ is $\Pi^0_4$.
\end{lemma}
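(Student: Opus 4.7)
The plan is to apply Lemma \ref{equivclass} to reduce the question to a combinatorial problem on finitely many $\sim_{(n,k)}$-classes, and then use König's lemma twice to express everything arithmetically. First, from $\phi$ I would compute $(n,k)$ and the finite set $F$ of good class indices, so that $X_\phi = \bigcup_{f \in F} C_f$. Writing $G$ for the (finite) complement of $F$, $X_\phi$ is a subshift iff it is closed iff for every $g \in G$, $C_g \cap \overline{X_\phi} = \emptyset$: closure means $\overline{X_\phi} \subseteq X_\phi$, and every $z$ lies in a unique class $C_f$ with $z \in X_\phi$ iff $f \in F$. Since $G$ is finite, it suffices to show that for each fixed $g$, the condition ``$C_g \cap \overline{X_\phi} \neq \emptyset$'' is $\Sigma^0_4$.

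Next, by a König's lemma argument I would establish that $C_g \cap \overline{X_\phi} \neq \emptyset$ is equivalent to the existence of witness position sets $V_P \subset \Z^2$ of size $g(P)$ (for each $n \times n$ pattern $P$ with $g(P) < k$) and $W_P \subset \Z^2$ of size $k$ (for each $P$ with $g(P) = k$) such that, for every $N$, there exists a pattern $R : [-N,N]^2 \to \A$ satisfying: (a) $R$ places $P$ at every position of $V_P \cup W_P$ whose support fits in $[-N,N]^2$; (b) $R$ contains no occurrence of $P$ at any other position when $g(P) < k$; (c) $R \in \lc_N(X_\phi) = \{y|_{[-N,N]^2} : y \vDash \phi\}$. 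The forward direction records the $P$-occurrences of any $z \in C_g \cap \overline{X_\phi}$; the converse builds a König limit on the finitely branching tree of such patterns $R$. The explicit witnesses $V_P$ are essential so that the limit contains \emph{exactly} $g(P)$ occurrences of $P$ when $g(P) < k$, placing it in $C_g$ and not merely in $\overline{C_g}$.

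I would then apply the same style of compactness to condition (c). Membership $R \in \lc_N(X_\phi)$ is equivalent to: there exist $f \in F$ and witness sets $V'_P, W'_P$ of the appropriate sizes such that for every $N' \geq N$ some pattern extending $R$ to $[-N',N']^2$ places $P$ at $V'_P \cup W'_P$ and nowhere else when $f(P) < k$. The inner $\forall N' \, \exists R'$ quantifies over a decidable predicate and is $\Pi^0_1$, so wrapping it in the outer $\exists f, V'_P, W'_P$ yields $\Sigma^0_2$.

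Finally, counting alternations: the expression $\exists V_P, W_P \, \forall N \, \exists R\,[\text{decidable} \land \Sigma^0_2]$ collapses by absorbing the bounded $\exists R$ and the decidable conjunct into $\Sigma^0_2$; then $\forall N \, \Sigma^0_2 = \Pi^0_3$, and the outer $\exists$ gives $\Sigma^0_4$. The finite disjunction over $g \in G$ stays in $\Sigma^0_4$, so $\MFOSUB \in \Pi^0_4$. The main obstacle is verifying the König arguments carefully so that both limits lie in exactly the right classes — this is precisely why one must use explicit witness positions to pin the finite pattern counts, rather than simply asking for local consistency, which would only yield membership in the closure.
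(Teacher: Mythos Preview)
Your proposal is correct and follows the same high-level strategy as the paper: apply Lemma~\ref{equivclass} to reduce to finitely many $\sim_{(n,k)}$-classes, then express ``$X_\phi$ is not closed'' as a $\Sigma^0_4$ predicate using compactness and explicit finite witness positions for pattern occurrences. The concrete formulations differ in perspective. You work from the \emph{outside}: fix a bad class $C_g \not\subseteq X_\phi$, certify $C_g \cap \overline{X_\phi} \neq \emptyset$ via witness positions $V_P, W_P$ for the limit point, and handle condition~(c) by a nested $\Sigma^0_2$ characterization of $R \in \lc_N(X_\phi)$ with its own witness sets $V'_P, W'_P$. The paper works from the \emph{inside}: it fixes a good class $C_i \subseteq X_\phi$ and a splitting $I_1 + I_2 = f_{C_i}$ of its pattern counts into a ``near'' part (map $M_1$ confined to $[-m,m]^2$) and a ``far'' part (map $M_2$ escaping outside $[-\ell,\ell]^2$), packages the realizability into a single $\Pi^0_1$ ``tiles the plane'' predicate, and adds the $\Sigma^0_1$ requirement that the near data alone realize no $C_j$. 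Both routes rely on exactly the point you stress --- explicit finite position data is what pins the limit to a \emph{specific} class rather than merely its closure --- and both yield the same $\exists\,\forall\,\exists\,[\Pi^0_1 \wedge \Sigma^0_1]$ shape. Your two-level K\"onig argument is arguably more transparent; the paper's single tiling predicate is more compact; the arithmetic bookkeeping comes out identical.
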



\begin{proof}[sketch]
  Let $\phi$ be a first order formula.
  Using Lemma \ref{equivclass} we can compute $n, k \in \N$ and disjoint $\sim_{(n,k)}$-equivalence classes $C_1, \ldots, C_r$ such that $X_{\phi}=\bigcup_{i=1}^r C_i$. We want to prove that knowing whether $\bigcup_{i=1}^r C_i$ is closed is $\Pi^0_4$.

  For a $\sim_{(n,k)}$-equivalence class $C=\iC{\A}$, define the function $f_C : \A^{n^2} \to \{0, \ldots, k\}$ by $f_C(P)=a_P$.
  A \emph{map} of $C$ is a function $M: \A^{n^2} \rightarrow \Pe_{\mathrm{fin}}(\Z^2)$. 
  For $\ell \in \N$ and two maps $M_1, M_2$ of $C$, we say that $(C,\ell,M_1,M_2)$ \emph{tiles the plane} if there exists $x \in \A^{\Z^2}$ such that for each $P \in \A^{n^2}$:
\begin{itemize}
    \item if $f_C(P) < k$, then the pattern $P$ appears at positions $M_1(P) \cup M_2(P)$ and no other positions,
    \item if $f_C(P) = k$, then the pattern $P$ appears on positions $M_1(P) \cup M_2(P)$ (with $|M_1(P) \cup M_2(P)|\ge k$), and if $|M_1(P)|<k$, the only positions of $[-\ell,\ell-1]^2$ at which $P$ appears are in $M_1(P)$.
    \end{itemize}
    To decide whether a given quadruple tiles the plane is $\Pi^0_1$.
  
  Now let us consider the following arithmetical formula $F$:
  there exist $m \in \N$, an index $i \in \{1, \ldots, r\}$ and two functions $I_1, I_2 : \A^{n^2} \to \{0, \ldots, k\}$ such that $I_1 + I_2 = f_{C_i}$, and
  for all $\ell > m$ there exist two maps $M_1, M_2$ of $C_i$ such that
  \begin{itemize}
  \item $|M_j(P)| = I_j(P)$ for each $j = 1,2$ and $P \in \A^{n^2}$,
  \item $M_1(P) \subseteq [-m,m]^2$ and $M_2(P) \cap [-\ell,\ell-1]^2 =\emptyset$ for each $P \in \A^{n^2}$,
  \item $(C_i,\ell,M_1,M_2)$ tiles the plane, and
  \item there is no $j \in \{1, \ldots, r\}$ such that $(C_j,0,M_1,P \mapsto \emptyset)$ tiles the plane.
  \end{itemize}

  By its form, $F$ is $\Sigma^0_4$. One can prove that $F$ holds if and only if $\bigcup_{i=1}^r C_i$ is not closed.
  The idea is that if $F$ holds, then one obtains a sequence $x_\ell$ of configurations in which the maps $M_1$ and $M_2$ control the positions of each pattern $P \in \A^{n^2}$, and the tiling conditions make sure that at least one pattern ``escapes to infinity'' in such a way that no limit point of the sequence is in $\bigcup_{i=1}^r C_i$.
\end{proof}

\begin{lemma} \label{mfohard}
$\MFOSUB$ is $\Pi^0_4$-hard.
\end{lemma}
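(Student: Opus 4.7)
The plan is to reduce a $\Sigma_4^0$-complete decision problem to the complement of $\MFOSUB$, namely to ``$X_\phi$ is not a subshift'', which is $\Sigma_4^0$ by Lemma \ref{mfoin}. A convenient target is
\[
A \ed \{\langle M\rangle : \exists n_1\ \forall n_2\ \exists n_3\ (M \text{ does not halt on } \langle n_1, n_2, n_3\rangle)\},
\]
whose shape $\exists\forall\exists\Pi_1^0$ matches the shape $\exists(m,i,I_1,I_2)\,\forall \ell\,\exists(M_1,M_2)\,\Pi_1^0$ of the non-closedness certificate from the sketch of Lemma \ref{mfoin}. The strategy is to design an FO formula $\phi$ so that each non-closedness quantifier encodes the corresponding $A$-quantifier.

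Given $\langle M\rangle$, I would build an SFT $Y_M$ with a \emph{primary marker} $\mu$ and \emph{secondary markers}, whose local rules enforce: whenever $\mu$ appears, every cell to its right on the same row contains a secondary marker; this row carries a ``conveyor belt'' encoding a value $n_1$; the secondary marker at horizontal distance $n_2$ from $\mu$ is tagged with the integer $n_2$ via a standard column-counter construction; and the tape beneath each secondary marker carries a simulation of $M$ on input $\langle n_1, n_2, n_3\rangle$, where $n_3$ is encoded locally and the initial tile of Property \ref{computationtiles} is present to enforce non-halting. The all-blank configuration is admitted, as are the topological closed-up limits of such configurations in which $\mu$ has escaped to infinity. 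The formula $\phi$ is the conjunction of the universal SFT condition of $Y_M$ (of the form $\forall v\,\psi_Y(v)$ with $\psi_Y$ quantifier-free) with the FO condition ``$\mu$ occurs exactly once''.

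Correctness is then a two-step argument. First, $X_\phi\neq \emptyset$ iff $\langle M\rangle\in A$: a configuration in $X_\phi$ pins down an $n_1$ via its $\mu$-row belt and yields a witness $n_3^{(n_2)}$ for every $n_2$ from the tape under each secondary marker; conversely, any $A$-witness can be assembled into such a configuration. Second, whenever $X_\phi\neq \emptyset$, picking $y\in X_\phi$ and considering the translates $\sigma_{(n,0)}(y)$ gives a sequence in $X_\phi$; a convergent subsequence limits to a configuration $y_\infty\in Y_M$ whose row is a biinfinite chain of secondary markers with no $\mu$, hence $y_\infty\notin X_\phi$. So $X_\phi$ fails to be closed precisely when $\langle M\rangle\in A$, giving a many-one reduction of $A$ to the complement of $\MFOSUB$ and thus $\Pi_4^0$-hardness.

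The main obstacle will be the SFT design: implementing the distance-$n_2$ counter on each column, propagating $n_1$ along the conveyor belt, and verifying that the closed-up limits of $X_\phi$ really do lose $\mu$ (rather than, for instance, duplicating it or making the simulations invalid at infinity). These ingredients are standard in the multidimensional SFT literature, but packaging them inside a purely first-order universal formula $\forall v\,\psi_Y(v)$, so that the full $\phi$ is FO and the whole complexity is carried by a single FO count condition on $\mu$, will require some careful bookkeeping.
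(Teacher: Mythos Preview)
Your overall strategy — reduce a $\Sigma^0_4$-complete problem to ``$X_\phi$ is not closed'', and witness non-closedness by translating so that a unique marker escapes to infinity — is sound, and your closure analysis is correct. The genuine gap is in the SFT design. You propose an independent simulation of $M$ on $\langle n_1, n_2, n_3\rangle$ ``beneath each secondary marker'', one for every $n_2 \in \N$ along the same row. But a non-halting Turing-machine simulation (as in Property~\ref{computationtiles}) occupies an unbounded two-dimensional region; anchoring one at every integer point of a row forces these regions to overlap, and a finite alphabet cannot carry infinitely many independent computations on a shared support. This is not a bookkeeping issue: the only way to verify infinitely many $\Pi^0_1$ conditions inside one configuration is to dovetail them in a \emph{single} quarter-plane machine that reads $n_1$ together with an oracle sequence $(n_3^{(n_2)})_{n_2}$ from its boundary and interleaves the runs — a substantively different construction from what you sketch. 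With that repair the rest of your argument goes through.

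The paper takes a different route that avoids this difficulty entirely. It reduces from $\forall\mathrm{COF}$ and places only \emph{one} computation $M(v,w)$ per configuration: the outer $\exists$ is the first input $v$, the innermost $\exists\,\Pi^0_1$ block is a single non-halting run, and the middle $\forall$ is realized \emph{across} a sequence of configurations by letting $|w|$ grow. A diagonal signal emitted from the end-of-$w$ marker meets the diagonally propagated copies of $v$ at a point whose distance from the input row grows with $|w|$; anchoring that meeting point and passing to a limit leaves the diagonal (tiles $1$ and $2$) visible while the start-of-input tile $\$$ (tile $3$) escapes, violating an FO conditional of the shape ``(tile $1$ and tile $2$ present) $\Rightarrow$ (tile $3$ present)''. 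Thus the paper's $X_\phi$ is never empty, and closedness is controlled by a failed implication rather than by your exactly-one-marker count. Your (repaired) version packs the whole $\Sigma^0_4$ witness into a single configuration and is in some ways more direct; the paper's version needs only a single elementary computation per configuration and a short geometric argument.
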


\begin{proof}[sketch]
We reduce the $\Pi^0_4$-complete problem $\forall COF$: given a Turing Machine $M$ which takes two inputs, does $M(v,\_)$ have a cofinite language for all $v$?
For this, we define an SFT $X$ 
and add on it a first order formula to get a formula $\phi$. Then, we will prove that $M \in \forall COF$ if and only if $\phi \in \MFOSUB$.

Let $M$ be a turing machine. The SFT $X$ is the one described by Figure \ref{fig:mfohard} (the only allowed patterns are those of size $2 \times 2$ in this figure, but other inputs $v$ and $w$ are possible, even the infinite one). 
The input $v$ is copied to the southeast until it hits the gray diagonal signal emitted by the ${*}$-symbol at the end of the second input $w$, where it is erased.
The diagonal turns black after the point of erasure.
The machine $M$ is not allowed to halt, so in all configurations containing two finite inputs $v$ and $w$, $M(v,w)$ does not halt. 

Let $\phi'$ be the formula which defines this SFT, and let $\phi=\phi' \land ((\exists x \ \exists y \ P_1(x) \land P_2(y)) \Rightarrow \exists z \ P_3(z))$, where tiles 1, 2 and 3 are showed in Figure \ref{fig:mfohard}). Intuitively, $\phi$ requires the configuration to satisfy $\phi'$, and if there is a diagonal with a finite input $v$ written on it, then there also exists a finite input $w$.

Now $M \in \forall COF$ if and only if $\phi \in MFOSUB$, since the only reason for $X_\phi$ not to be closed is that for some input $v$, there are infinitely many inputs $w$ such that $M(v,w)$ does not halt, and a limit point of the correspondng configurations violates the second condition of $\phi$.
\end{proof}

\begin{figure}
    \centering
    \begin{tikzpicture}[scale=0.5]

\begin{scope}
\draw[clip] decorate [decoration={random steps,segment length=4pt,
amplitude=1.5pt}] {(-0.5,-0.5) -- (16.5,-0.5) -- (16.5,12.5) -- (-0.5,12.5) -- cycle};

\fill[white!70!black] (0,8) -- (17,8) -- (17,15) -- (0,15) -- cycle;
\fill[\grey] (13,8) -- (14,8) -- (6,0) -- (5,0) -- cycle;
\fill (7,2) -- (7.5,1.5) -- (5,-1) -- (4,-1) -- cycle;
\fill[\grey] (6,8) -- (13,8) -- (13,9) -- (6,9) -- cycle;

\draw (0,-1) -- (0,15);
\draw (1,-1) -- (1,15);
\draw (2,-1) -- (2,15);
\draw (3,-1) -- (3,15);
\draw (4,-1) -- (4,15);
\draw (5,-1) -- (5,15);
\draw (6,-1) -- (6,15);
\draw (7,-1) -- (7,15);
\draw (8,-1) -- (8,15);
\draw (9,-1) -- (9,15);
\draw (10,-1) -- (10,15);
\draw (11,-1) -- (11,15);
\draw (12,-1) -- (12,15);
\draw (13,-1) -- (13,15);
\draw (14,-1) -- (14,15);
\draw (15,-1) -- (15,15);
\draw (16,-1) -- (16,15);

\draw(-1,0) -- (17,0);
\draw(-1,1) -- (17,1);
\draw(-1,2) -- (17,2);
\draw(-1,3) -- (17,3);
\draw(-1,4) -- (17,4);
\draw(-1,5) -- (17,5);
\draw(-1,6) -- (17,6);
\draw(-1,7) -- (17,7);
\draw(-1,8) -- (17,8);
\draw(-1,9) -- (17,9);
\draw(-1,10) -- (17,10);
\draw(-1,11) -- (17,11);
\draw(-1,12) -- (17,12);
\draw(-1,13) -- (17,13);
\draw(-1,14) -- (17,14);

\draw(0.5,8.5) node {$\$$};
\draw(1.5,8.5) node {$x_1$};
\draw(2.5,8.5) node {$x_2$};
\draw(3.5,8.5) node {$x_3$};
\draw(4.5,8.5) node {$x_4$};
\draw(5.5,8.5) node {$\#$};
\draw(6.5,8.5) node {$y_1$};
\draw(7.5,8.5) node {$y_2$};
\draw(8.5,8.5) node {$y_3$};
\draw(9.5,8.5) node {$y_4$};
\draw(10.5,8.5) node {$y_5$};
\draw(11.5,8.5) node {$y_6$};
\draw(12.5,8.5) node {$y_7$};
\draw (13.5,8.5) node {$*$};

\draw(2.5,7.5) node {$x_1$};
\draw(3.5,6.5) node {$x_1$};
\draw(4.5,5.5) node {$x_1$};
\draw(5.5,4.5) node {$x_1$};
\draw(6.5,3.5) node {$x_1$};
\draw(7.5,2.5) node {$x_1$};

\draw(3.5,7.5) node {$x_2$};
\draw(4.5,6.5) node {$x_2$};
\draw(5.5,5.5) node {$x_2$};
\draw(6.5,4.5) node {$x_2$};
\draw(7.5,3.5) node {$x_2$};
\draw(8.5,2.5) node {$x_2$};

\draw(4.5,7.5) node {$x_3$};
\draw(5.5,6.5) node {$x_3$};
\draw(6.5,5.5) node {$x_3$};
\draw(7.5,4.5) node {$x_3$};
\draw(8.5,3.5) node {$x_3$};

\draw(5.5,7.5) node {$x_4$};
\draw(6.5,6.5) node {$x_4$};
\draw(7.5,5.5) node {$x_4$};
\draw(8.5,4.5) node {$x_4$};
\draw(9.5,3.5) node {$x_4$};

\fill[white!70!black] (2.5,10.5) -- (11.5,10.5) -- (11.5,11.5) -- (2.5,11.5) -- cycle;
\draw (7,11) node {\Large Computation of $M$};

\end{scope}

\draw (8.5,1.5) node {2};
\draw[->] (8.2,1.5) -- (7.8,1.5);

\draw (14.5,7.5) node {1};
\draw[->] (14.2,7.5) -- (13.8,7.5);

\draw (-1,8.5) node {3};
\draw[->] (-0.7,8.5) -- (0.2,8.5);

\end{tikzpicture}
    \caption{A configuration of the SFT $X$ in the proof of Lemma \ref{mfohard}.}
    \label{fig:mfohard}
\end{figure}
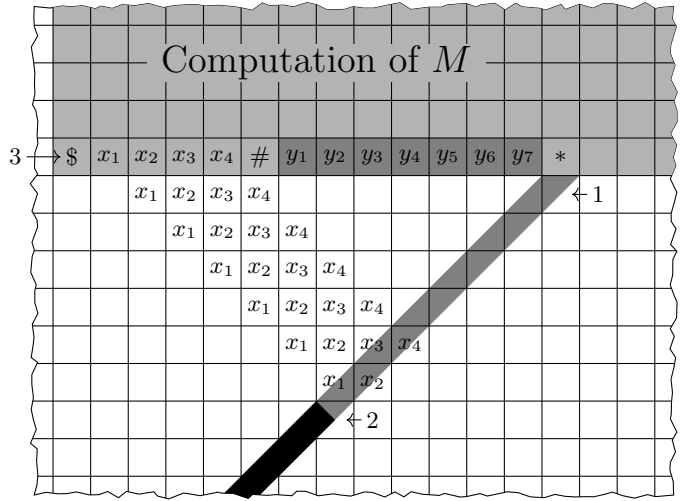

\subsection{Higher order cases}

\begin{theorem}\label{PbnSUB}
For all $n \ge 1$, $\Pb{n}$-SUB and $\Sb{n}$-SUB are $\Pi^1_n$-complete.
\end{theorem}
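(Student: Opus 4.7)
The plan is to establish the $\Pi_n^1$ upper bound and $\Pi_n^1$-hardness separately; I will treat $\Pb{n}$-SUB in detail, and $\Sb{n}$-SUB follows by a dual argument.

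For the upper bound, I would start from the characterization that $X_\phi$ is a subshift iff $\overline{X_\phi} \subseteq X_\phi$, i.e., every configuration $x$ whose finite subpatterns all appear in some $y \in X_\phi$ is itself in $X_\phi$. For $\phi \in \Pb{n}$ of the form $\forall \vec X_1 \exists \vec X_2 \cdots \psi$, the straightforward reading of this closure condition is an implication $\Sigma_{n+1}^1 \Rightarrow \Pi_n^1$ under $\forall x$, giving only $\Pi_{n+1}^1$. To tighten the bound to $\Pi_n^1$, I would merge the existential over the approximating sequences $(y_D)_D$ with the MSO prefix of $\phi$ itself: the first-order universal $\forall D$ over finite windows is absorbed freely, while the sequence encoding of the $y_D$'s together with the Skolem witnesses needed for each $\phi(y_D)$ must be packed into a single second-order quantifier block so that no new alternation is introduced. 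This mirrors, at the second-order level, the careful bookkeeping through the finite ``maps tiling the plane'' in the proof of Lemma~\ref{mfoin}, which is what kept the FO case at $\Pi_4^0$ rather than one level higher.

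For the lower bound, I would reduce from a canonical $\Pi_n^1$-complete problem, for instance deciding whether $\forall Y_1\,\exists Y_2\cdots Q_n Y_n\,R(Y_1,\ldots,Y_n)$ holds for a recursive predicate $R$. Starting from the SFT of Lemma~\ref{mfohard}, I would augment the alphabet with $n$ binary data tracks playing the role of the second-order variables $Y_1,\ldots,Y_n$. The formula $\phi \in \Pb{n}$ would be the SFT constraints conjoined with a $\Pb{n}$-MSO statement whose innermost first-order part simulates the machine for $R$ along each diagonal, with non-halting forced exactly when the tracks realize witnesses in the proper alternation pattern. Exactly as in the FO case, failure of closedness of $X_\phi$ would come from an ``escape to infinity'' of configurations violating the encoded $\Pi_n^1$ property; the $\Sb{n}$-SUB hardness is obtained by dualizing the quantifier structure on both sides of the reduction.

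The main obstacle, by far, is the tight quantifier counting in the upper bound: the naive translation of closedness yields $\Pi_{n+1}^1$, and only a careful absorption of the extra existential over approximating configurations into the existing second-order prefix of $\phi$ brings it down to $\Pi_n^1$. I would work out $n=1$ first, with the innermost $\psi$ first-order, to locate precisely where the absorption drops the complexity from $\Pi_2^1$ to $\Pi_1^1$, and only then attempt the general case --- either by a direct second-order analogue of the equivalence-class decomposition of Lemma~\ref{equivclass}, or by induction on $n$ using the $\Sb{n-1}$/$\Pb{n}$ duality of the MSO hierarchy.
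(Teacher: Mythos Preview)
Your upper-bound plan misidentifies where the difficulty lies. The paper never needs to ``merge the existential over approximating sequences with the MSO prefix'' or perform any Skolemisation. The crucial observation you are missing is that, for $\phi\in\Pb{n}$, the predicate ``$x\vDash\phi$'' (with $x$ as oracle) is already $\Pi^1_{n-1}$, not $\Pi^1_n$: by Corollary~\ref{fofonda} the first-order matrix can be put into $\exists\forall$ or $\forall\exists$ form, and then Remark~\ref{calcfonda} (a compactness/K\"onig argument) absorbs the \emph{innermost} second-order quantifier into that arithmetic matrix, dropping one alternation. Once you know this, closedness is simply
\[
\forall_1 (x_n)_n\;\bigl[(\forall n\; x_n\vDash\phi)\wedge(\text{convergence})\Rightarrow \lim_n x_n\vDash\phi\bigr],
\]
which is $\forall_1\,[\Sigma^1_{n-1}\vee\Pi^1_{n-1}]$ and hence $\Pi^1_n$ by routine prenexing. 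Your $\Sigma^1_{n+1}\Rightarrow\Pi^1_n$ count comes from overestimating both sides by one level.

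For hardness, your escape-to-infinity plan via the SFT of Lemma~\ref{mfohard} with extra oracle tracks is more elaborate than needed. The paper instead builds $\phi\in\Pb{n}$ so that $X_\phi$ is \emph{empty} (hence trivially a subshift) exactly when the given $\Pi^1_n$ sentence holds, and otherwise contains a quarter-plane configuration but not the all-blank limit, so is not closed. More importantly, your claim that $\Sb{n}$-SUB ``follows by a dual argument'' is wrong: both problems are $\Pi^1_n$-complete (closedness is intrinsically a universal condition), so there is no duality to exploit. The paper's $\Sb{n}$-hardness proof uses a genuinely different gadget---an auxiliary $\{0,1\}$-layer carrying at most one $1$---to manufacture non-closedness when the $\Pi^1_n$ sentence fails.
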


We split the proof into Lemmas \ref{pisubin}, \ref{pisubhard} and \ref{sisubhard}.

\begin{remark}\label{calcfonda}
All problems of the form $\exists_1 X_1 \exists x_1 \forall x_2 \ R$ with $R$ a decidable predicate, are in $\Sigma^0_2$. Dually, all problems of the form $\forall_1 X_1 \forall x_1 \exists x_2 \ R$ with $R$ a decidable predicate, are in $\Pi^0_2$.
\end{remark}


\begin{lemma} \label{pisubin}
For all $n \ge 1$, $\Pb{n}$-SUB and $\Sb{n}$-SUB are in $\Pi^1_n$.
\end{lemma}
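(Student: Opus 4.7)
Since $X_\phi$ is automatically shift-invariant, it is a subshift iff it is topologically closed. I would therefore show that the negation ``$X_\phi$ is not closed'' is $\Sigma^1_n$. By compactness of $\A^{\Z^2}$, this negation is equivalent to the existence of a configuration $x \notin X_\phi$ together with a second-order encoding $Z$ of a sequence $(z_m)_{m \in \N}$ of configurations in $X_\phi$ such that $z_m$ agrees with $x$ on the ball of radius $m$ around the origin.

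For $\phi = \forall X_1 \exists X_2 \cdots \in \Pb{n}$, the condition ``$\forall m,\ z_m \in X_\phi$'' can be put into prenex form by bundling the per-$m$ witnesses into second-order sequences $X_i^{\cdot}$ via countable choice: one obtains a $\Pi^1_n$ formula $\forall X_1^{\cdot} \exists X_2^{\cdot} \cdots \forall m\, \psi(X_\cdot^m, z_m)$, in which the first-order $\forall m$ is absorbed into the innermost $\forall$-block. The condition ``$x \notin X_\phi$'' is the prenex negation of $\phi$, a $\Sigma^1_n$ statement $\exists Y_1 \forall Y_2 \cdots \neg \psi(Y_\cdot, x)$. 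I would then prenex ``$X_\phi$ is not closed'' by combining the outer $\exists (x,Z)$ with these two subformulas, pairing the corresponding $\exists$- and $\forall$-blocks of the $\Sigma^1_n$ and $\Pi^1_n$ parts and merging the outer existentials into the leading $\exists$-block. Because the two sets of second-order variables $X_i^{\cdot}$ and $Y_i$ are disjoint, the pairing aligns them block by block, and the outer $\exists$ does not create a new alternation, so the total is $\Sigma^1_n$. The $\Sb{n}$ case is symmetric: now ``$\forall m,\ z_m \in X_\phi$'' is $\Sigma^1_n$ (with the outermost existentials of $\phi(z_m)$ Skolemized over $m$ into sequences) and ``$x \notin X_\phi$'' is $\Pi^1_n$, and the same alignment reasoning applies.

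The main obstacle is this final prenex step. A generic conjunction of a $\Sigma^1_n$ and a $\Pi^1_n$ formula lies only in $\Delta^1_{n+1}$, so reaching the tight $\Sigma^1_n$ bound requires both the disjointness of the two sets of second-order variables coming from the two halves of the non-closedness certificate and the careful absorption of $\exists(x,Z)$ into the leading $\exists$-block of the combined prefix rather than creating an extra outer alternation. This quantifier bookkeeping is the main technical content of the proof, and it is what aligns the alternating structure of the MSO hierarchy with that of the analytical hierarchy.
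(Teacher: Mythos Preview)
Your overall strategy---express non-closedness as the existence of a convergent sequence in $X_\phi$ whose limit lies outside, then count quantifier alternations---is the same as the paper's. The gap is in the alternation count, and it is exactly the ``main obstacle'' you flag but do not resolve.

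You treat ``$x \vDash \phi$'' for $\phi \in \Pb{n}$ as a $\Pi^1_n$ predicate and ``$x \not\vDash \phi$'' as $\Sigma^1_n$. With these bounds, non-closedness becomes $\exists(x,Z)\,[\Pi^1_n \wedge \Sigma^1_n]$. You correctly note that a generic such conjunction is only $\Delta^1_{n+1}$, and your proposed fix---that disjointness of variables plus absorbing $\exists(x,Z)$ into the leading block yields $\Sigma^1_n$---does not work. Interleaving $\forall A_1 \exists A_2 \cdots$ ($n$ blocks) with $\exists B_1 \forall B_2 \cdots$ ($n$ blocks) produces $\exists B_1\,\forall A_1 B_2\,\exists A_2 B_3\,\cdots$, which has $n{+}1$ blocks regardless of how you pair them, because the two prefixes are out of phase by one. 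Disjointness of variables only lets you commute quantifiers of the \emph{same} type; it cannot eliminate an alternation. So your argument as written gives only $\Pi^1_{n+1}$.

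The missing ingredient is that ``$x \vDash \phi$'' for $\phi \in \Pb{n}$ is in fact $\Pi^1_{n-1}$, not $\Pi^1_n$. The paper obtains this by stripping the second-order quantifiers to expose the first-order core $\phi^*$, using Corollary~\ref{fofonda} to see that ``$y \vDash \phi^*$'' is $\Delta^0_2$ in the oracle $y$, and then invoking Remark~\ref{calcfonda} to absorb the innermost second-order quantifier into this arithmetical tail (e.g.\ $\exists_1 X\,\exists a\,\forall b\,R$ collapses to $\Sigma^0_2$). This saves one level. With ``$x \vDash \phi$'' at $\Pi^1_{n-1}$ and ``$x \not\vDash \phi$'' at $\Sigma^1_{n-1}$, the very interleaving you describe now gives $n$ blocks, and the outer $\exists(x,Z)$ merges with the leading $\exists$ to yield $\Sigma^1_n$, hence closedness is $\Pi^1_n$. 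Your framework is fine once you plug in this sharper bound; without it the proof does not reach the target complexity.
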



\begin{proof}
  Let $\phi \in \Pb{n}$ with $m$ second order quantifiers (the $\Sb{n}$ case is similar). Denote by $\phi^*$ the first order formula on $\A \times \{0,1\}^m$ obtained from $\phi$ by removing second order quantifiers and replacing each $v \in X_i$ with $\bigvee_{c \in \pi_i^{-1}(1)} P_c(v)$, where $\pi_i (c_0, \hdots, c_m)=c_i$. For a configuration $x \in \A^{\Z^2}$, we have $x \vDash \phi$ if and only if
  \begin{equation}
    \label{eq:pisubin}
    \forall X_1 \ \hdots \ Q X_m \ x \times X_1 \times \hdots \times X_m \vDash \phi^*
  \end{equation}
for $Q=\exists$ if $n$ is even, $Q=\forall$ otherwise.
From Corollary \ref{fofonda} we deduce that $x \times X_1 \times \hdots \times X_m \vDash \phi^*$ is a $\Delta^0_2$ condition, when $\phi^*$ is given as input and $x$ and the $X_i$ are given as oracles.
With Remark \ref{calcfonda} it follows that \eqref{eq:pisubin}, and hence $x \vDash \phi$, is $\Pi^0_2$ for $n = 1$, and $\Pi^1_{n-1}$ for $n \geq 2$, with $\phi$ as input and $x$ as an oracle.

The set $X_\phi$ is closed if and only if for all sequences $x_n$ of configurations such that $x_n \vDash \phi$ for all $n$ and $x_n |_{[-n,n]^2} = x_m|_{[-n,n]^2}$ for all $m \geq n$, we have $\lim_n x_n \vDash \phi$.
By the above, this condition is $\Pi^1_n$.
\end{proof}

\begin{lemma} \label{pisubhard}
For all $n \ge 1$, $\Pb{n}$-SUB is $\Pi^1_n$-hard.
\end{lemma}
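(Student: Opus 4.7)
The strategy is to lift the construction of Lemma \ref{mfohard} from $\Pi^0_4$ to the analytical hierarchy by replacing the plain Turing machine computations with oracle machine computations that reflect $n$ alternations of set quantifiers. I will reduce from the following problem, which is $\Pi^1_n$-complete by a standard relativization of the $\Pi^0_4$-completeness of $\forall\mathrm{COF}$: given a Turing machine $M$ with $n$ oracle tapes and two inputs, decide whether
\[
\forall X_1 \, \exists X_2 \, \cdots \, Q_n X_n \ \forall v \ \forall^\infty w \ M^{\vec X}(v, w) \text{ halts}.
\]

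Given such an $M$, I first adapt the SFT $X$ of Figure \ref{fig:mfohard} to an SFT $X'$ whose diagonal computation simulates $M^{\vec X}(v, w)$, but where each oracle query of oracle $i$ at some position $q$ is recorded by a special ``query tile'' displaying a purported response bit. The local tile constraints enforce that subsequent computation steps respect whatever bit is recorded, but make no attempt to compare the bit against any actual oracle. Let $\phi_{SFT}$ be the first-order formula defining $X'$.

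I then construct the $\Pb{n}$ formula
\[
\phi = Q_1 X_1 \cdots Q_n X_n \bigl[\phi_{SFT} \wedge \phi_{\mathrm{match}}(\vec X) \wedge ((\exists x \exists y \, P_1(x) \wedge P_2(y)) \Rightarrow \exists z \, P_3(z))\bigr],
\]
where $\phi_{\mathrm{match}}(\vec X)$ is a first-order formula asserting that every query tile for oracle $i$ displays the bit matching membership in $X_i$ at the cell whose coordinates encode the query index, and the last conjunct is the closure-breaking condition from Lemma \ref{mfohard}. The prefix $Q_1X_1\cdots Q_nX_n$ is taken to match the alternation of the $\Pi^1_n$ problem.

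The verification parallels Lemma \ref{mfohard}. If the $\Pi^1_n$ instance holds, then for any oracles $\vec{X}$ making $\phi_{\mathrm{match}}$ true in a given configuration, cofinitely many $w$ make $M^{\vec{X}}(v,w)$ halt, so any configuration in $X_\phi$ containing a finite input $v$ must also contain a finish marker $P_3$; as in Lemma \ref{mfohard}, this forces $X_\phi$ to be closed. Conversely, if the instance fails, adversarial choices realizing the negated prefix yield some $v$ admitting infinitely many non-halting $w$, and the corresponding configurations form a sequence in $X_\phi$ whose limit contains $v$ but no $w$, violating the closure-breaking conjunct.

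The main obstacle is aligning the semantics of $\phi_{\mathrm{match}}(\vec X)$ with the quantifier alternation so that the $Q_iX_i$ prefix of $\phi$ faithfully encodes the $\Pi^1_n$ alternation. Since a single halting branch only records finitely many queries, we must rely on the many simultaneous computations $M^{\vec{X}}(v,w)$ present across a configuration to pin down enough of each $X_i$; care must be taken so that $\phi_{\mathrm{match}}$ neither trivializes (by leaving $X_i$ unconstrained and allowing any witness) nor over-constrains (by forcing $X_i$ to be unique per configuration, which would collapse the alternation). I expect this to be resolvable by letting $\phi_{\mathrm{match}}(\vec{X})$ constrain $X_i$ only through the recorded query tiles, and ensuring that the simulated computations traverse enough positions to distinguish the universal and existential choices of oracles needed by the $\Pi^1_n$ instance.
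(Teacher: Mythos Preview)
Your proposal has a genuine structural gap, one you yourself flag as ``the main obstacle'' but do not actually resolve. In your formula
\[
\phi \;=\; \forall X_1 \, \exists X_2 \cdots Q_n X_n \;\bigl[\phi_{SFT} \wedge \phi_{\mathrm{match}}(\vec X) \wedge (\text{closure condition})\bigr],
\]
the outermost quantifier is universal. For any configuration $x$ containing even a single oracle-$1$ query tile with recorded bit $b$ at a cell encoding index $q$, the adversary chooses $X_1$ so that membership of $q$ in $X_1$ disagrees with $b$; this falsifies $\phi_{\mathrm{match}}$, and hence the whole conjunction, regardless of $X_2,\ldots,X_n$. So $x \not\vDash \phi$, and $X_\phi$ contains only configurations with no oracle-$1$ queries at all, carrying no information about the $\Pi^1_n$ instance. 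Replacing the conjunction by an implication $\phi_{\mathrm{match}} \Rightarrow (\text{closure condition})$ does not help: the closure condition is a statement about tiles of the main configuration $x$ alone and is independent of the $X_i$, so whether $x \vDash \phi$ collapses to whether that first-order condition holds on $x$, and the second-order prefix becomes vacuous. Either way you have produced only an FO-definable set.

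The underlying problem is architectural. In your design the computation, and with it the closure-breaking condition, lives entirely in the main configuration; the $X_i$ merely audit oracle answers already baked into $x$. No matching constraint can manufacture genuine set-quantifier alternation out of a configuration that has already committed to a single oracle trace. The paper's proof avoids this by putting the data and the computation in different places: the main configuration $x$ encodes the first (universally quantified) oracle $A_1$ as a binary word on the rows of a quarter-plane; the quantified configurations $X_2,\ldots,X_n$ encode $A_2,\ldots,A_n$ in the same format; and a final quantified configuration $Z$ carries the deterministic computation that reads all of them and is tested for visiting a special state finitely or infinitely often. Then $x \vDash \phi$ is literally the assertion that the tail of the $\Pi^1_n$ formula fails at the oracle $A_1$ coded by $x$, so $X_\phi$ is empty (hence trivially a subshift) exactly when the full $\Pi^1_n$ formula holds, and otherwise contains a quarter-plane configuration whose all-blank limit is excluded.
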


\begin{proof}
  We reduce an arbitrary problem in $\Pi^1_n$. Let $R$ be a computable predicate. We construct a formula $\phi \in \Pb{n}$ such that $\forall A_1 \hdots Q A_n \ \ob{Q} k \  Q m \ R$ holds if and only if $X_\phi$ is a subshift, with $Q=\exists$ if $n$ is even and $Q=\forall$ otherwise.

  Let $\phi'$ be the formula which describes a tiling where a northeast quarter-plane is coloured with 0 and 1 such that each row is coloured in a same way, and the other tiles are blank. Since the quarter-plane must appear, $X_{\phi'}$ is not a subshift. The infinite binary sequence repeated on each line is interpreted as the oracle $A_1$ in the following.

  Let $\phi=\forall X_2 \in X_{\phi'} \hdots \ \ob{Q} X_n \in X_{\phi'} \ \ob{Q} Z \ \phi' \wedge \psi$, where $\psi$ is the formula stating the following:
  \begin{itemize}
  \item each of $X_2, \ldots, X_n$ has a northeast quarter-plane at the same position as the main configuration $x \in X_{\phi'}$,
  \item the configuration $Z$ contains, in the same quarter-plane, a simulated computation of a Turing machine $M$ that can use the binary sequences of $x$ and the $X_i$ as oracles,
  \item if $\ob{Q} = \exists$, then $M$ visits a special state $q_s$ an infinite number of times (a $\Pb{1}$ condition by Example \ref{ex:finite}), and
  \item if $\ob{Q} = \forall$, then $M$ visits the state $q_s$ a finite number of times.
  \end{itemize}
  Notice that the configuration $Z$ is determined by what was quantified before, so its quantifier is irrelevant.
  Hence $\phi$ is $\Pb{n}$.

We specify the machine $M$.
Starting with $k = 0$, it enumerates $m = 0, 1, 2, \ldots$ until it finds one such that $R(A_1, \ldots, A_n, k, m)$ does not hold. Whan this happens, it visits the state $q_s$, increments $k$, and starts the enumeration again.

Notice that $x \vDash \phi$ if and only if $\forall A_2 \hdots \ \ob{Q} A_n \ \ Q k \ \ob{Q} m \ \lnot R$ holds.
Hence, if $\forall A_1 \exists A_2 \hdots Q A_n \ \ob{Q}k \ Q m \ R$ holds, then $X_{\phi}$ is empty, and so a subshift.
If the formula does not hold, then at least one configuration with a quarter-plane is in $X_{\phi}$. But the blank configuration is not in $X_\phi$, so $X_{\phi}$ is not a subshift.
\end{proof}

\begin{lemma} \label{sisubhard}
For all $n \ge 1$, $\Sb{n}$-SUB is $\Pi^1_n$-hard.
\end{lemma}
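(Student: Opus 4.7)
The plan is to mimic the proof of Lemma~\ref{pisubhard} but arrange the quantifier prefix of $\phi$ to start with $\exists$, by introducing an extra existentially quantified ``oracle'' configuration at the outside. After negation inside the subshift check, this new $\exists$ becomes a $\forall$ adjacent to the $\forall A_1$ coming from the subshift check, and the two merge into a single outermost universal block, recovering the $n$-alternation structure of a $\Pi^1_n$-complete predicate.

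Given a $\Pi^1_n$-complete predicate $\forall B_1 \exists B_2 \cdots Q' B_n \, S(B_1,\ldots,B_n)$ with $S$ arithmetic, the formula to construct is
\[
\phi \ed \phi' \wedge \exists W \in X_{\phi'}\ \forall X_2 \in X_{\phi'}\ \exists X_3 \in X_{\phi'} \cdots Q X_n \in X_{\phi'}\ \psi,
\]
reusing the first-order SFT $\phi'$ and the ``quarter-plane aligned with that of $x$'' restrictions on $W, X_2, \ldots, X_n$ from Lemma~\ref{pisubhard} verbatim. The subformula $\psi$ refers to the Turing machine $M$ of Lemma~\ref{pisubhard}, now reading oracles $A_1, W', A_2', \ldots, A_n'$ from the quarter-planes of $x, W, X_2, \ldots, X_n$. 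Take $\psi$ to be ``$M$ visits $q_s$ finitely often'' (a $\Sb{1}$ condition via Example~\ref{ex:finite}) when $n$ is odd, and ``$M$ visits $q_s$ infinitely often'' (a $\Pb{1}$ condition) when $n$ is even, so that $\psi$'s outermost quantifier has the same polarity as the preceding $Q X_n$ and merges with it in prenex normal form, placing $\phi$ literally in $\Sb{n}$.

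As in Lemma~\ref{pisubhard}, the blank configuration is not in $X_\phi$ and any configuration with a quarter-plane shifts to blank in the limit, so $X_\phi$ is a subshift iff $\forall A_1\, \lnot(x_{A_1} \vDash \phi)$. Negating the $\Sb{n}$ body and combining with the outer $\forall A_1$ of the subshift check yields $\forall A_1 \, \forall W' \, \exists A_2' \cdots \bar{Q} A_n' \, \lnot\psi'$; merging the adjacent universals $\forall A_1 \forall W'$ into a single block over $B_1 = (A_1, W')$ and relabelling $B_i = A_{i+1}'$ for $i \ge 2$ produces precisely the given $\Pi^1_n$-complete predicate, with the arithmetic tail encoded by the $q_s$-visit pattern of $M$ exactly as in Lemma~\ref{pisubhard} (a universal choice of $M$ absorbs any arithmetic $S$). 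The main point to verify carefully is the parity bookkeeping: for both parities of $n$, the merge of $\psi$'s outermost quantifier with $Q X_n$ and the merge of $\forall A_1$ with $\forall W'$ must jointly give the correct alternation of the $\Pi^1_n$ predicate, with no stranded quantifier.
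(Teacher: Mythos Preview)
Your approach is correct and genuinely different from the paper's.

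\textbf{Comparison.} The paper does \emph{not} prepend an extra existential oracle. Instead it modifies $\phi'$ so that $X_{\phi'}$ is already a subshift: the quarter-plane layer now includes its limit configurations, and a second binary layer carries at most one $1$-symbol. The formula $\phi$ is built so that configurations with a $1$ on the second layer always satisfy it, while for a configuration with an all-$0$ second layer and oracle $A_1$ on the quarter plane, $\phi$ holds iff $\exists A_2\cdots Q A_n\,\bar Q k\,Q m\,R$. Thus when the $\Pi^1_n$ predicate holds, $X_\phi$ is a product of two subshifts; when it fails, some all-$0$-layer configuration is missing while its one-$1$ perturbations are present, so $X_\phi$ is not closed. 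Your route keeps the non-subshift $\phi'$ of Lemma~\ref{pisubhard} verbatim and uses ``subshift iff empty'' exactly as there; the only new idea is the outer $\exists W$, whose dual $\forall W'$ merges with the ambient $\forall A_1$ after negation. This is arguably the more economical argument, since it literally reuses Lemma~\ref{pisubhard}; the paper's gadget avoids the extra set variable but needs a new non-closedness mechanism.

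\textbf{Two minor points.} First, your relabelling should read $B_i=A_i'$ for $i\ge 2$, not $B_i=A_{i+1}'$ (the latter loses $A_2'$). Second, note that $W$ can be a pure dummy: if $M$ ignores $W'$, the merged block $\forall A_1\forall W'$ collapses to $\forall A_1$ and you recover exactly the $\Pi^1_n$ predicate with $B_i=A_i$. Presenting it this way makes the parity bookkeeping transparent: you are inserting an irrelevant leading $\exists$ into the $\Pb{n}$ formula of Lemma~\ref{pisubhard}, which syntactically lands it in $\Sb{n}$ without changing the set it defines, and hence without changing the reduction.
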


The proof is similar to that of Lemma \ref{pisubhard}, but instead of emptiness, we use a separate binary layer with at most one $1$-symbol.

\section{Complexity of the MSO languages}
\label{sec:complexity}

In this section, we consider the complexity of the $C$-definable sets and subshifts for $C$ classes of the MSO hierarchy.

\subsection{Maximal complexity}

Here we determine the maximal complexity of languages of formulas in the different classes of the MSO hierarchy.

\begin{theorem}
Languages of FO and $\Sb{1}$ sets are $\Sigma^0_2$.
Languages of $\Sb{n}$ sets for $n \ge 2$ are $\Sigma^1_{n-1}$.
Languages of $\Pb{n}$ sets for $n \ge 1$ are $\Sigma^1_n$.
\end{theorem}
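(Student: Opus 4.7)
The plan is to handle the four bounds in a unified framework using Corollary~\ref{fofonda}, compactness of $\A^{\Z^2}$, and a projection argument reducing $\Sb{n+1}$ to $\Pb{n}$ on an augmented alphabet. The real work sits in the FO base case and in the direct $\Pb{n}$ case; the $\Sb{n+1}$ cases are then routine.

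For the FO case, I would apply Corollary~\ref{fofonda} to put $\phi$ in the form $\exists \vec{x}\,\forall \vec{y}\,\psi(\vec{x},\vec{y})$ with $\psi$ quantifier-free, so that $P \in \Le(X_\phi)$ holds iff there exist $\vec{x} \in (\Z^2)^n$ and $x \in [P]$ with $\forall \vec{y}\,\psi(x,\vec{x},\vec{y})$. For fixed $\vec{x}$, the set $\{x : \forall \vec{y}\,\psi\}$ is an intersection of clopen cylinders, hence closed in $\A^{\Z^2}$. By compactness, its non-empty intersection with the clopen $[P]$ is equivalent to a $\Pi^0_1$ condition in $(P, \vec{x})$---for every $N$ there must exist a pattern on $[-N,N]^2 \cup \supp(P)$ extending $P$ and satisfying every local constraint whose term positions lie in $[-N,N]^2$---and prepending the arithmetic quantifier $\exists \vec{x}$ yields $\Sigma^0_2$.

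For the $\Sb{n+1}$ cases ($n \ge 0$), I would write $\phi = \exists X\,\phi'$ with $\phi' \in \Pb{n}$ and observe that $X_\phi$ is the projection of $X_{\phi'}$ from $(\A \times \{0,1\})^{\Z^2}$ onto $\A^{\Z^2}$. Hence a pattern $P$ lies in $\Le(X_\phi)$ iff some augmented extension of $P$ on the same support lies in $\Le(X_{\phi'})$, which is a finite disjunction over the $\{0,1\}$-colourings of $\supp(P)$ and therefore preserves the complexity class. This gives $\Sigma^0_2$ for $\Sb{1}$ from the FO bound, and $\Sigma^1_n$ for $\Sb{n+1}$ once the $\Pb{n}$ bound below is established.

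For the $\Pb{n}$ case, writing $\phi$ in prenex form as $\forall X_1\,\exists X_2 \cdots Q X_n\,\psi$ with $\psi$ first-order gives
\[ P \in \Le(X_\phi) \iff \exists x \in [P]\,\forall X_1\,\exists X_2 \cdots Q X_n\,\psi(x,X_1,\ldots,X_n), \]
which naively has $n+1$ alternating second-order blocks over an arithmetic kernel and thus lies in $\Sigma^1_{n+1}$. To descend to $\Sigma^1_n$ one needs to absorb the outer $\exists x$ using compactness of $\A^{\Z^2}$: since $\psi$ is first-order, its truth depends on $x$ only at the finitely many positions determined by its first-order witnesses, so a König-style extraction should fold the existence of a global $x \in [P]$ compatible with all the resulting constraints into the existing $n$ second-order alternations together with the first-order quantifiers of $\psi$, rather than keeping it as a separate second-order existential. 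The main obstacle is exactly this absorption: the configuration $x$ must work \emph{uniformly} across every choice of $X_1$, so the compactness extraction must be threaded through the universal layer without spawning an additional second-order quantifier, and this is where the proof will need to be most careful.
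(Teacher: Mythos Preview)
Your FO case is correct and is essentially the paper's argument: Corollary~\ref{fofonda} plus compactness, the latter packaged in the paper as Remark~\ref{calcfonda}. Your projection step reducing $\Sb{n+1}$ to $\Pb{n}$ (and $\Sb{1}$ to FO) via a finite disjunction over $\{0,1\}$-extensions of $P$ on the same support is also correct, and is a clean alternative to the paper's direct treatment of the $\Sb{n}$ cases.

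The gap is in the $\Pb{n}$ case: you are trying to save the extra alternation at the wrong end of the prefix. Absorbing the outer $\exists x$ through $\forall X_1$ is exactly as problematic as you fear---$x$ must be chosen uniformly in $X_1$, and the set $\{x : \forall X_1 \cdots\}$ need not be closed, so compactness gives nothing there---and the paper does not attempt it. The save happens at the \emph{innermost} second-order quantifier instead. Write $\phi = \forall X_1 \exists X_2 \cdots Q X_n\,\psi$ with $\psi$ first order, and use Corollary~\ref{fofonda} to put $\psi$ in $\exists\vec{u}\,\forall\vec{v}\,\chi$ or $\forall\vec{u}\,\exists\vec{v}\,\chi$ form, whichever matches the polarity of the innermost $Q$. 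Then the block $Q X_n$ together with this first-order tail is already arithmetic by Remark~\ref{calcfonda}: the same K\"onig-type argument you invoked in the FO case shows that $\exists_1 X\,\exists u\,\forall v\,R$ is $\Sigma^0_2$ and dually $\forall_1 X\,\forall u\,\exists v\,R$ is $\Pi^0_2$. One second-order quantifier thus collapses into the arithmetic tail, so $x \vDash \phi$ is only $\Pi^1_{n-1}$ (respectively $\Pi^0_2$ when $n=1$) with $x$ as oracle. Now the outer $\exists x$ is simply the leading second-order existential of $\Sigma^1_n$, and no further absorption is needed.
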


\begin{proof}
Let $\phi \in \Sb{n}$. For a pattern $P$, we have $P \in \Le(X_{\phi})$ if there exists $x \in \A^{\Z^2}$ such that $x \vDash \phi$ and $P$ appears in $x$. Given $x$ as an oracle, $P$ appearing in $x$ is a $\Sigma^0_1$ predicate, and $x \vDash \phi$ is $\Sigma^1_{n-1}$ if $n \ge 2$, and $\Sigma^0_2$ if $n \le 1$ thanks to Corollary \ref{fofonda} and Remark \ref{calcfonda}. We conclude that $\Le(X_{\phi})$ is $\Sigma^1_{n-1}$ if $n \ge 2$ and $\Sigma^0_2$ if $n \le 1$ (we need once more Remark \ref{calcfonda} for the case $n=1$).

Let $\phi \in \Pb{n}$ for $n\ge 1$. Given $x \in \A^{\Z^2}$ as an oracle, $x \vDash \phi$ is $\Pi^1_{n-1}$ if $n \ge 2$, and $\Pi^0_2$ if $n=1$ thanks to Corollary \ref{fofonda} and Remark \ref{calcfonda}. We conclude that $\Le(X_{\phi})$ is $\Sigma^1_n$ for all $n \ge 1$.
\end{proof}

\begin{theorem} \label{folim}
There exists an FO-definable subshift with a $\Sigma^0_2$-hard language.
\end{theorem}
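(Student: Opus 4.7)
My plan is to reduce the $\Sigma^0_2$-complete non-totality problem $\mathrm{NT} = \{M : \exists w,\ M(w)\text{ does not halt}\}$ to membership in $\mathcal{L}(X)$ for a carefully constructed FO-definable subshift $X$. A Turing machine $M$ is mapped to a finite pattern $P_M$ that fixes a distinguished "initial tile" plus the bits of $M$'s code written on a designated region of the tape row; the aim is to arrange $X$ so that $P_M \in \mathcal{L}(X) \Leftrightarrow M \in \mathrm{NT}$.

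To build $X$, I would take the SFT of Property \ref{computationtiles} for a fixed universal machine $U$, so that any non-blank configuration depicts a valid computation of $U$ on a tape containing a pair $(M',w)$, with the SFT forbidding halt tiles so that the computation is forced to be non-halting. To this SFT I would add a signalling gadget analogous to the diagonal of Figure \ref{fig:mfohard}: a "delimiter" tile emits a signal that cuts off the tape after a finite prefix, so that configurations containing a delimiter carry a finite input $w$. The first-order component of the definition then imposes the closed count constraints $\forall u,v\ P_I(u)\wedge P_I(v)\Rightarrow u=v$ and $\forall u,v\ P_D(u)\wedge P_D(v)\Rightarrow u=v$, forcing at most one initial tile and at most one delimiter. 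Both constraints are preserved under topological limits because occurrence counts are lower-semicontinuous, so the intersection with the SFT is a subshift, and the conjunction is FO by construction.

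For the reduction itself, if $M \in \mathrm{NT}$, pick a witness $w$, place $P_M$'s initial tile and machine code, write $w$ on the tape east of them, drop a delimiter just past $w$, and fill the rest of the plane with the unique non-halting computation of $U(M,w)$; this produces a configuration of $X$ containing $P_M$, so $P_M \in \mathcal{L}(X)$. Conversely, from any $X$-configuration extending $P_M$, the unique initial tile together with the delimiter (whose presence is forced in any completion of $P_M$ by the signalling gadget) reads off a finite $w$ such that $U(M,w)$ does not halt, hence $M\in \mathrm{NT}$.

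The main obstacle is the gap between the closed "$\leq 1$" constraint on the delimiter and the ideal "exactly one" constraint: in a limit of $X$-configurations whose delimiters escape to infinity one obtains delimiter-free configurations that are still in $X$, and a König's-lemma argument on such configurations could permit \emph{infinite-input} non-halting computations to witness $P_M\in \mathcal{L}(X)$ even for $M \notin \mathrm{NT}$, collapsing $\mathcal{L}(X)$ to $\Pi^0_1$. The crux of the proof is therefore to engineer the SFT signalling gadget so that every delimiter-free extension of $P_M$ either coexists with, or can be transformed into, a delimiter-containing finite-input extension, thereby keeping $\mathcal{L}(X)$ precisely at the $\Sigma^0_2$ level and realizing the matching lower bound.
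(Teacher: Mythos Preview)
Your proposal correctly targets the $\Sigma^0_2$-complete non-totality problem, but the gap you flag in the last paragraph is real and is not resolved by ``engineering the signalling gadget''. The issue is structural: your FO constraints (at most one initial tile, at most one delimiter) are closed, so your subshift $X$ necessarily contains the delimiter-free limit configurations in which the input is infinite. For such configurations to be harmless you would need that whenever $M$ is total, $U(M,w)$ halts on \emph{every} infinite $w$ as well---but this fails already for the machine $M$ that scans right until it sees a blank: it is total on finite inputs yet loops on any infinite input, so $P_M$ extends to a delimiter-free configuration of $X$ and hence $P_M \in \mathcal{L}(X)$ despite $M \notin \mathrm{NT}$. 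Your suggested fix, that delimiter-free extensions ``can be transformed into'' delimiter-containing ones, is precisely what cannot happen for total $M$: there is no finite $w$ with $U(M,w)$ non-halting, so there is nothing to transform into.

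The paper avoids guessing an input $w$ altogether. Only the finite code $[M]$ is written on the tape, and above it a fixed machine $M'$ is simulated that \emph{dovetails}: it runs $M(0)$ until halting, then $M(1)$, then $M(2)$, and so on, entering a special state $q_H$ and emitting a westward-then-southward signal each time a run halts. The FO clause then says: if the initial tile and a finite-code marker are both present, then some cell on the west border of the quarter-plane carries no signal. This holds exactly when the dovetailing gets stuck on some $k$, i.e.\ $M(k)$ diverges. Although ``there exists a signal-free border cell'' is an existential condition, closedness still holds because once $[M]$ is fixed the computation of $M'$ is \emph{deterministic}: any convergent sequence $x_n \to x$ with the same code $[M]$ must be eventually constant, so a limit violating the clause is impossible. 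This determinism---one configuration per code, up to shift---is exactly what your input-guessing scheme lacks, and it is the missing idea.
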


\begin{proof}
Let $\phi'$ be the first order formula which defines an SFT $X$ containing, in a northeast quarter-plane, a simulated computation of a turing machine $M'$. Figure \ref{fig:M'} contains asample configuration. The machine $M'$ takes the code $[M]$ of another Turing machine $M$, and starting with $k = 0$, simulates $M$ with input $k$. If $M$ halts on $k$, then $M'$ enters a special state $q_H$, increments $k$ and starts again. The tile containing $q_H$ sends a signal (dark gray in Figure \ref{fig:M'}) to the west, which turns south when it encounters the left side of the quarter-plane, merging with another southward signal if one is present. We forbid to have such a signal under the diagonal of the quarter-plane.

Now let $\phi=\phi' \land ((\exists u \ \exists v \ P_1(u) \land P_2(v) \land P_3(\south(v))) \Rightarrow \exists w \ \lnot P_4(w) \land P_3(\west(w)))$ (tiles 1, 2, 3 and 4 are shown on Figure \ref{fig:M'}). Intuitively, $\phi$ requires that $\phi'$ is true and if the computation starts and the input is finite, then there exists a tile on the west border of the quarter-plane with no signal on it.

We prove first that $X_{\phi}$ is a subshift. Suppose for a contradiction that there is a sequence $(x_n)_{n \in \N}$ such that $x_n \vDash \phi$ for all $n$ and $x_n \overset{n \rightarrow \infty}{\longrightarrow} x \not \vDash \phi$. Since $X_{\phi'}$ is a subshift, $x \vDash \phi'$. Hence $x$ shows a computation with a Turing machine $M$ on which the west border of the quarter-plane is covered by the signal. As $x_n$ converges to $x$, from some $n$ onward, the code $[M]$ is entirely written on the south border of the quarter-planes of the $x_n$ and the computation of $M'$ is determined by this. Since $x_n \vDash \phi$, the west borders are not covered by the signal. Then the the sequence $x_n$ must be eventually constant. With $x \not \vDash \phi$ we reach a contradiction, so $X_{\phi}$ is a subshift.

Notice that $[M]$ is in $\Le(X_{\phi})$ if and only if $M \in \coTOTAL$, i.e. if $M$ does not halt on at least one input. This problem is known to be $\Sigma^0_2$-complete. 
\end{proof}

\begin{figure}
    \centering
\begin{tikzpicture}[scale=0.5]

\begin{scope}
\draw[clip] decorate [decoration={random steps,segment length=4pt,
amplitude=1.5pt}] {(-0.5,-0.5) -- (-0.5,10.5) -- (10.5,10.5) -- (10.5,-0.5) -- cycle};

\fill[white!80!black] (0,0) -- (0,11) -- (11,11) -- (11,0) -- cycle;

\fill[\grey] (0,0) -- (1,0) -- (1,11) -- (0,11) -- cycle;
\fill[\grey] (1,4) -- (4,4) -- (4,5) -- (1,5) -- cycle;
\fill[\grey] (1,6) -- (3,6) -- (3,7) -- (1,7) -- cycle;
\fill[\grey] (1,8) -- (5,8) -- (5,9) -- (1,9) -- cycle;

\draw (-1,0) -- (11,0);
\draw (-1,1) -- (11,1);
\draw (-1,2) -- (11,2);
\draw (-1,3) -- (11,3);
\draw (-1,4) -- (11,4);
\draw (-1,5) -- (11,5);
\draw (-1,6) -- (11,6);
\draw (-1,7) -- (11,7);
\draw (-1,8) -- (11,8);
\draw (-1,9) -- (11,9);
\draw (-1,10) -- (11,10);

\draw (0,-1) -- (0,11);
\draw (1,-1) -- (1,11);
\draw (2,-1) -- (2,11);
\draw (3,-1) -- (3,11);
\draw (4,-1) -- (4,11);
\draw (5,-1) -- (5,11);
\draw (6,-1) -- (6,11);
\draw (7,-1) -- (7,11);
\draw (8,-1) -- (8,11);
\draw (9,-1) -- (9,11);
\draw (10,-1) -- (10,11);


\draw[very thick] (0,0) -- (10.6,10.6); 




\draw (0.5,0.5) node {$\$$};

\draw (1.5,0.5) node {$M_1$};
\draw (1.5,1.5) node {$M_1$};
\draw (2.5,0.5) node {$M_2$};
\draw (2.5,1.5) node {$M_2$};
\draw (2.5,2.5) node {$M_2$};
\draw (3.5,0.5) node {$M_3$};
\draw (3.5,1.5) node {$M_3$};
\draw (3.5,2.5) node {$M_3$};
\draw (3.5,3.5) node {$M_3$};
\draw (4.5,0.5) node {$M_4$};
\draw (4.5,1.5) node {$M_4$};
\draw (4.5,2.5) node {$M_4$};
\draw (4.5,3.5) node {$M_4$};
\draw (4.5,4.5) node {$M_4$};
\draw (5.5,0.5) node {$M_5$};
\draw (5.5,1.5) node {$M_5$};
\draw (5.5,2.5) node {$M_5$};
\draw (5.5,3.5) node {$M_5$};
\draw (5.5,4.5) node {$M_5$};
\draw (5.5,5.5) node {$M_5$};

\draw (0.5,0.5) -- (0.5,1.5) -- (1.5,1.5) -- (1.5,2.5) -- (2.5,2.5) -- (2.5,3.5) -- (3.5,3.5) -- (3.5,4.5) -- (2.5,4.5) -- (2.5,6.5) -- (3.5,6.5) -- (3.5,7.5) -- (4.5,7.5) -- (4.5,8.5) -- (3.5,8.5) -- (3.5,9.5) -- (2.5, 9.5) -- (2.5,11);

\draw[dashed] (0.5,1) -- (0.5, 11);
\draw[dashed] (1.5,2) -- (1.5, 11);
\draw[dashed] (2.5,3) -- (2.5, 11);
\draw[dashed] (3.5,4) -- (3.5, 11);
\draw[dashed] (4.5,5) -- (4.5, 11);
\draw[dashed] (5.5,6) -- (5.5, 11);

\stateb{0.5}{1.5}{1}
\stateb{1.5}{2.5}{2}
\stateb{2.5}{3.5}{3}
\stateb{3.5}{4.5}{H}
\stateb{2.5}{5.5}{4}
\stateb{2.5}{6.5}{H}
\stateb{3.5}{7.5}{5}
\stateb{4.5}{8.5}{H}
\stateb{3.5}{9.5}{6}
\stateb{2.5}{10.5}{7}

\end{scope}

\draw (-1,0.5) node {1};
\draw[->] (-0.7,0.5) -- (0.2,0.5);

\draw (11,0.5) node {2};
\draw[->] (10.7,0.5) -- (7.6,0.5);

\draw (-1,2) node {4};
\draw[->] (-0.8,2) -- (0.2,2.5);
\draw[->] (-0.8,2) -- (0.2,1.5);

\draw (-1,4.5) node {3};
\draw[->] (-0.8,4.5) -- (-0.35,4.5);

\end{tikzpicture}
    \caption{Computation of $M'$.}
    \label{fig:M'}
\end{figure}
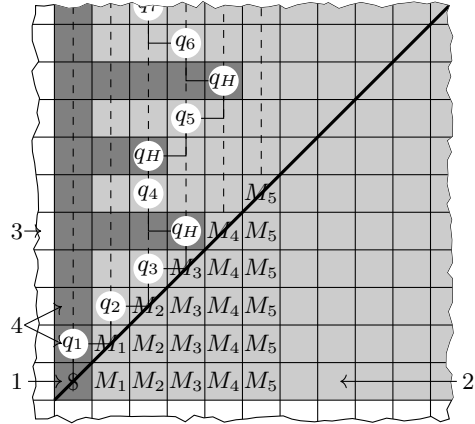

\begin{theorem}\label{Pbnlim}
For all $n \ge 1$, there exists a $\Pb{n}$-definable set with a $\Sigma^1_n$-hard language.
\end{theorem}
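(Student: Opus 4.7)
The plan is to mimic the construction of Lemma \ref{pisubhard}, with the outer universal quantifier over configurations (coming from the subshift question there) replaced by the existential quantifier implicit in the definition of the language $\Le$. This naturally provides the outermost existential oracle of a $\Sigma^1_n$-complete predicate, exactly matching the $\Sigma^1_n$ bound from the previous theorem. Fix a computable predicate $R$ such that the problem ``given a Turing machine $M$, does $\exists A_1 \forall A_2 \cdots Q A_n \ \ob{Q} k \ Q' m \ R(M, A_1, \dots, A_n, k, m)$ hold?'' is $\Sigma^1_n$-complete, with $Q$, $Q'$ determined by the parity of $n$.

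Let $\phi'$ be a first-order formula defining a SFT in which each configuration is either blank or contains a northeast quarter-plane with the code $[M]$ of some Turing machine followed by an infinite binary sequence on its south border, as in the proof of Theorem \ref{folim}; this binary sequence plays the role of the oracle $A_1$. Define
\[
  \phi \ed \forall X_2 \ \exists X_3 \ \cdots \ Q X_n \ \ob{Q} Z \ (\phi' \wedge \psi),
\]
where $\psi$ asserts the following: each $X_i$ encodes a configuration of $X_{\phi'}$ whose quarter-plane is aligned with that of $x$, so that its binary sequence represents the oracle $A_i$; the set $Z$ encodes a valid simulated computation, inside that quarter-plane, of a Turing machine $M^*$ which consults all the oracles $A_1, \dots, A_n$; and the visits of $M^*$ to a special state $q_s$ are infinite if $\ob{Q} = \exists$ and finite if $\ob{Q} = \forall$, encoding the truth of $\ob{Q} k \ Q' m \ R$ (expressible with the finiteness construction of Example \ref{ex:finite}).

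Exactly as in Lemma \ref{pisubhard}, the quantifier on $Z$ is irrelevant because $Z$ is determined by the other second-order variables; we choose it opposite to that of $X_n$ in order to keep the alternation count at $n$, so that $\phi \in \Pb{n}$. By construction, $[M] \in \Le(X_\phi)$ holds if and only if there exists a configuration $x$ whose quarter-plane carries $[M]$ together with some binary sequence $A_1$ with $x \vDash \phi$; unfolding the definition of $\phi$ turns this into $\exists A_1 \forall A_2 \cdots \ob{Q} k \ Q' m \ R$, which is exactly the chosen $\Sigma^1_n$-complete problem, and the map $M \mapsto [M]$ is clearly computable.

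The main obstacle is the careful design of $\phi'$ and $\psi$: one must ensure that the $X_i$ effectively quantify over arbitrary binary oracles (by restricting them to configurations of $X_{\phi'}$ whose quarter-plane coincides with that of $x$), that the simulated computation encoded in $Z$ faithfully consults all of $A_1, \dots, A_n$, and that the state-visit condition is expressed while keeping the overall formula in $\Pb{n}$. All of this directly parallels the technical work already carried out in Lemma \ref{pisubhard}.
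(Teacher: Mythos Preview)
Your approach is correct and essentially identical to the paper's: encode $[M]$ together with the first oracle $A_1$ in the main configuration, carry $A_2,\dots,A_n$ in the universally/existentially quantified $X_2,\dots,X_n$, run a deterministic computation in $Z$ whose quantifier is irrelevant, and capture the arithmetical tail via the finite/infinite-visits trick of Lemma~\ref{pisubhard}.

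There is one slip worth fixing. Your pairing ``infinite visits when $\ob{Q}=\exists$, finite when $\ob{Q}=\forall$'' is inverted: ``infinite visits'' is a $\Pb{1}$ condition and naturally encodes a $\forall k\,\exists m$ tail, while ``finite visits'' is $\Sb{1}$ and encodes $\exists k\,\forall m$, so to literally express $\ob{Q}k\,Q'm\,R$ you need the opposite assignment (this is also what the paper does). With your parity, after using the irrelevance of $Z$ the inner block has the \emph{same} type as $Q X_n$, so the whole formula lands in $\Pb{n-1}$ rather than $\Pb{n}$; this is harmless for $n\ge 2$, but for $n=1$ it leaves you with a formula equivalent to a $\Sb{1}$ formula and not a $\Pb{1}$ one, so the case $n=1$ is not covered as written.
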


\begin{proof}
Let $\phi'$ be the first order formula which defines the set of configurations that are blank except for an infinite binary word $w_1$, and the code $[M]$ of a total Turing machine $M$ written just above this word, at the left end. As no other configurations are allowed, $X_{\phi'}$ is not a subshift.

Let $\phi = \phi' \wedge \forall X_2 \ \exists X_3 \hdots \ Q \ X_n \ Q \ Z \ \psi$, where $Q = \forall$ if $n$ is even and $Q = \exists$ otherwise, and $\psi$ specifies that each $X_i$ contains an infinite binary word $w_i$ at the same position as the main configuration $x$, and $Z$ contains a simulated computation which verifies $\exists k \ \forall m \ M(w_1, \ldots, w_n, k, m)$ if $n$ is even, and $\forall k \ \exists m \ M(w_1, \ldots, w_n, k, m)$ if $n$ is odd.
We use the same technique as in Lemma \ref{pisubhard} to design $\phi$ so that it is $\Pb{n}$.


Now $[M]$ is in $\Le(X_{\phi})$ if and only if
\[
\exists w_1 \ \forall w_2 \ \hdots \ Q \ w_n \ \ob{Q} \ k \ Q \ m \ M(w_1, \ldots, w_n, k, m).
\]
This condition is $\Sigma^1_n$-hard, so $\Le(X_{\phi})$ is $\Sigma^1_n$-hard too.
\end{proof}

\begin{theorem}\label{pb1language}
There exists a $\Pb{1}$-definable subshift with a $\Pi^0_3$-hard language.
\end{theorem}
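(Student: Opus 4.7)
The plan is to reduce $\NcoTOTAL$ --- ``$M$ fails to halt on infinitely many inputs'', a $\Pi^0_3$-complete problem --- to language membership for a $\Pb{1}$-definable subshift. The idea is to extend the quarter-plane SFT of Theorem \ref{folim} by simulating $M$ \emph{in parallel} on every input, and then use a single $\forall Z$ quantifier to express that the set of non-halting inputs is unbounded.

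First I would build an SFT $X_0$, definable by a first-order formula $\phi'$, whose configurations are either blank or contain a northeast quarter-plane. Its southwest corner carries a unique tile $c_0$, and its south border encodes $\$[M]$ for some Turing machine $M$. For each input $k \in \N$, a dedicated vertical strip of the quarter-plane simulates $M$ on $k$ using Berger-style computation tiles, and a local rule forces a halt signal to propagate downward from any halt state to the base tile of that strip. The base tile thus carries a symbol $H$ if $M$ halts on $k$ and a symbol $N$ otherwise. The main obstacle here is the routine but technical SFT engineering of this parallel simulation with arbitrarily large inputs; I would handle it by standard tricks such as letting the strip widths grow with $k$ and using a binary counter along the south border to generate the input of each strip.

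I would then consider the formula
\[
\phi \;=\; \phi' \;\wedge\; \forall Z\,\Big[\,\exists v\,P_{c_0}(v) \;\Rightarrow\; \bigl(\mathrm{RightStrip}(Z) \Rightarrow \exists u\,(u \in Z \wedge P_N(u))\bigr)\Big],
\]
where $\mathrm{RightStrip}(Z)$ is a first-order formula stating that $Z = \{(a,b) \in \Z^2 : a \geq a_0\}$ for some $a_0 \in \Z$, namely that $Z$ is closed under east, north, and south, its complement is closed under west, and both $Z$ and its complement are non-empty. Pulling $\forall Z$ in front of the $Z$-free conjunct $\phi'$ puts $\phi$ in $\Pb{1}$.

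To see that $X_\phi$ is a subshift, let $x_n \to x$ with each $x_n \in X_\phi$. If $x$ contains no $c_0$ the inner implication is vacuous, so $x \in X_\phi$. Otherwise the unique $c_0$ of $x$ sits at a position at which $x_n$ also has $c_0$ for all sufficiently large $n$, so the quarter-plane, the encoded machine $M$, and the strip simulations of $x$ coincide with those of $x_n$ on every bounded window; each $x_n$ satisfies $\phi$, forcing $M \in \NcoTOTAL$, whence $\phi$ holds for $x$ too. Finally, $[M] \in \Le(X_\phi)$ iff some configuration of $X_\phi$ has $[M]$ on the south border of its quarter-plane, which by the $\forall Z$ clause is equivalent to the set of columns carrying an $N$-tile --- the non-halting inputs of $M$ --- being infinite, i.e., $M \in \NcoTOTAL$. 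This yields the desired $\Pi^0_3$-hardness.
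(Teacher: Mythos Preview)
Your proposal has two genuine gaps.

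\textbf{The parallel-strip SFT cannot record non-halting.} A vertical strip has a fixed finite width, which bounds the tape available to the simulation of $M$ on $k$. If $M$ on input $k$ uses more cells than the strip provides, the head reaches the strip boundary and the local rules must do \emph{something}: forbid it (then many machines admit no tiling at all), let the head stall or wrap (then you are simulating a different, space-bounded machine), or mark a ``blocked'' state. In none of these cases does the base tile end up carrying $N$ precisely when $M$ fails to halt on $k$. Letting widths grow with $k$ does not help: for any growth schedule $w(k)$ built into the SFT there are machines whose space on input $k$ exceeds $w(k)$, and if the widths are instead free parameters of the configuration, one can choose them small to produce infinitely many non-$H$ bases even for a total $M$, breaking the reduction in the other direction. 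Getting one cell of an SFT to reflect the $\Pi^0_1$ event ``$M(k)$ diverges'' genuinely requires an unbounded space-time region feeding into it (e.g.\ a single dovetailing computation with halt-signals routed to designated output cells), not disjoint bounded strips; this is exactly the step you flagged as the main obstacle, and ``standard tricks'' do not cover it.

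\textbf{Your $X_\phi$ is not closed as written.} Even granting a correct base-tile semantics, your closedness argument silently assumes the limit $x$ encodes a \emph{finite} machine code. Take $M_n\in\NcoTOTAL$ with $|[M_n]|\to\infty$ and their $c_0$ at the origin; the limit has $c_0$ present but an infinite string after $\$$ with no end-marker, hence no strips and no $N$-tiles, so the $\forall Z$ clause fails and $x\notin X_\phi$. The fix, as in the paper's Theorem~\ref{folim}, is to strengthen the hypothesis of the implication to ``$c_0$ occurs \emph{and} the end-of-$[M]$ marker occurs''.

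For comparison, the paper sidesteps both issues by not computing halting status inside the configuration. Its configurations carry $[M]$ followed by an explicit infinite increasing list of integers; four $\Pb{1}$ clauses (each a single universal quantifier over an auxiliary computation) verify that this list is infinite, contains only diverging inputs, omits only halting inputs, and is increasing. The list is then unique for a given finite $[M]$, so the ``eventually constant'' argument you used for closedness goes through, and $[M]\in\Le(X_\phi)$ iff such a list exists iff $M$ diverges on infinitely many inputs.
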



\begin{proof}[sketch]
Let $X$ be the subshift in which all the configurations are blank except at most one infinite word, which contains the code $[M]$ of a Turing machine $M$ followed by an infinite sequence of tiles coloured with $\{0,1,|\}$ (plus all limit configurations of these). Let $\psi$ be the first order formula such that $X=X_{\psi}$.

Let $\phi_1$ be a $\Pb{1}$ formula which requires that if $[M]$ is finite, then there are infinitely many $|$-tiles. We interpret this as $[M]$ being followed by an infinite list of natural numbers in binary.

Let $\phi_2$ be a $\Pb{1}$ formula which requires that if $[M]$ is finite, then for each natural number $n$ in the list, $M$ never halts on $n$. This can be done by universally quantifying on a configuration containing a simulated computation of $M(n)$.

Let $\phi_3$ be a $\Pb{1}$ formula which requires that if $[M]$ is finite, then for each natural number $n$, either $n$ occurs in the list, or $M(n) \downarrow$. This can be done by simulating $M(n)$ and simultaneously searching for $n$ in the list, and requiring that this computation halts at some point.

Let $\phi_4$ be a $\Pb{1}$ formula which requires that if $[M]$ is finite, the list of natural numbers written after $M$ is strictly increasing. Once again, we can do this using a single universal quantifier.

Then let $\phi$ be a $\Pb{1}$ formula equivalent with $\psi \land \phi_1 \land \phi_2 \land \phi_3 \land \phi_4$ (just rewriting this formula in the prenex form). One can prove that $X_{\phi}$ is a subshift with a $\Pi^0_3$-hard language by reducing the language of Turing machines that diverge on infinitely many inputs, which is known to be $\Pi^0_3$-complete.
\end{proof}

Thanks to Theorems \ref{folim} and \ref{Pbnlim}, we conclude that there exists a $\Sb{1}$-definable subshift with a $\Sigma^0_2$-hard language and for all $n \ge 2$, there exists a $\Sb{n}$-definable subshift with a $\Sigma^1_{n-1}$-hard language.

\subsection{Definable language classes}

Here we are interested in knowing which language classes of the arithmetical and analytical hierarchy are included in the languages of sets or subshifts definable by MSO formulas according to their complexity.

\begin{theorem}
Some decidable languages are not $\Sb{1}$-definable by sets.
Some decidable languages are not $\Pb{1}$-definable by sets.
\end{theorem}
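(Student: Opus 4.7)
The plan is to exhibit a single explicit decidable language $L$ that fails to be $\Le(X)$ for \emph{any} shift-invariant set $X \subseteq \A^{\Z^2}$; this settles both assertions at once, since $\Sb{1}$- and $\Pb{1}$-definable sets are in particular shift-invariant.

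Concretely, over $\A = \{0,1\}$ I would take $L$ to be the set of finite patterns whose number of $1$-coloured cells is odd. Membership in $L$ is decidable by direct counting.

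The argument then relies on two elementary properties of any language $\Le(X)$: it is empty iff $X$ is empty, and it is closed under restriction to sub-supports, because any sub-pattern of a pattern occurring in some $x \in X$ also occurs in $x$. Assume for contradiction that $L = \Le(X)$. The case $X = \emptyset$ is ruled out at once, because $L$ contains the singleton pattern with value $1$. So take $X \neq \emptyset$. The singleton pattern $(0)$ has zero $1$s, hence lies outside $L = \Le(X)$; by restriction-closure this forces every configuration of $X$ to avoid the colour $0$, so $X$ reduces to the constant-$1$ configuration. But then $\Le(X)$ must contain the $1 \times 2$ all-$1$ pattern, which has two $1$s and therefore lies outside $L$ --- a contradiction.

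I expect no substantial obstacle: the proof is pure bookkeeping of basic properties of $\Le$ and requires none of the finer structural analysis of $\Sb{1}$- or $\Pb{1}$-definable sets developed earlier. A more refined variant, insisting that the witness $L$ itself be $\Le(Y)$ for some set $Y$ (yet still not $\Sb{1}$- or $\Pb{1}$-definable), would require exploiting the characterization of $\Sb{1}$-definable sets as projections of unions of $\sim_{(n,k)}$-equivalence classes on an extended alphabet (Lemma~\ref{equivclass}); this would be the main additional work if the trivial witness above is deemed unsatisfying.
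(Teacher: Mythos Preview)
Your argument is formally correct for the statement as written: the odd-parity language $L$ is decidable and fails to be $\Le(X)$ for \emph{any} $X \subseteq \{0,1\}^{\Z^2}$, since every $\Le(X)$ is closed under restriction to sub-patterns while $L$ is not. This is, however, a genuinely different and much weaker route than the paper's. Your witness is ruled out for a reason---failure of sub-pattern closure---that has nothing to do with MSO definability; it is not a ``language'' in the sense the surrounding section cares about, namely an $\Le(Y)$. The paper instead exhibits an explicit \emph{subshift} $X$ (two adjacent equal-sized $\{0,1\}$-squares, each the mirror image of the other, on a $\#$-background) whose language is decidable, and shows via Lemma~\ref{equivclass} and a pigeonhole-and-swap argument in the style of \cite{kass-madden-13} that $\Le(X) \neq \Le(X_\phi)$ for every $\Sb{1}$ formula $\phi$: writing $\phi = \exists Y\,\psi$ with $\psi$ first order, the $\sim_{(n,k)}$-equivalence structure of $X_\psi$ lets one exchange the interior of one square for that of another while remaining in $X_\phi$, contradicting the mirror constraint. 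The payoff is that the paper's witness also serves the ``Subshifts'' columns of Table~\ref{tab:sumMSOlanguages}, giving a subshift with decidable language that is not $\Sb{1}$-definable---precisely the refinement you flag as ``the main additional work'' in your final paragraph.
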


\begin{proof}
We prove here the $\Sb{1}$ case; the $\Pb{1}$ case is similar. The proof is based on the technique of \cite{kass-madden-13}.

Let $X$ be the subshift on $\A=\{0,1,\#\}$ which is the closure of the set of configurations with two square $\{0,1\}$-patterns of the same size aligned and at distance one where everything around these patterns is colored by $\#$ and where each pattern is the mirror image of the other one. We denote the mirror image of a pattern $P$ by $P^R$.
It is clear that $\Le(X)$ is decidable.

Suppose for a contradiction that there exists a $\Sb{1}$ formula $\phi = \exists Y \in \Sigma^{\Z^2} \ \psi \in \Sb{1}$, where $\psi$ is FO, such that $\Le(X)=\Le(X_{\phi})$. For $N \geq 1$ and $P \in \{0,1\}^{N^2}$, let $x_P \in X$ be the configuration where $x|_{[-N,-1] \times [0,N-1]} = P$ and $x|_{[0,N-1]^2} = P^R$. The pattern $\rho_P = x|_{[-N-1, N] \times [-1,N]}$, which contains $P$ and $P^R$ surrounded by $\#$-symbols, is in $\Le(X) = \Le(X_{\phi})$, so it appears in some configuration of $X_\phi$. We see that the only configuration of $X_{\phi}$ in which $\rho_P$ appears in the same position as in $x_P$ must be $x_P$ itself.

As $\psi$ is FO, by Lemma \ref{equivclass} there exist $n,k \in \N$ such that $X_{\psi}$ is a union of $\sim_{(n,k)}$-equivalence classes. For $P \in \{0,1\}^{N^2}$, let $y_P \in \Sigma^{\Z^2}$ be such that $x_P \times y_P \vDash \psi$. For $N$ large enough there exist $P\neq Q \in \{0,1\}^{N^2}$ such that the pattern $(x_P \times y_P)|_P$ has the same border of size $n$ as $(x_Q \times y_Q)|_Q$ and the same number of each $n \times n$ pattern counted up to $k$ (since $(2|\Sigma|)^{4Nn-4n^2}(k+1)^{(2|\Sigma|)^{n^2}} < 2^{N^2}$ for large $N$).

Now consider the configuration $z = (z_1, z_2)$ defined by $z_{\vec v} = (x_P \times y_P)_{\vec v}$ for $\vec v \notin [0, N-1]^2$ and $z_{\vec v} = (x_Q \times y_Q)_{\vec v}$ for $\vec{v} \in [0, N-1]^2$.
In other words, we have replaced the $P^R$-patch of $x_P$ with $Q^R$, and made the corresponding replacement on the $y$-layer as well.
Let us prove that $z \sim_{(n,k)} x_P \times y_P$. They are equal outside $C_\mathrm{in} = [n, N-n-1]^2$ (see Figure \ref{fig:nondef}) and they have the same number of patterns of size $n \times n$ counted up to $k$ inside $C_\mathrm{out} = \Z^2 \setminus [0, N-1]^2$. So if $z$ and $x_P \times y_P$ are not equivalent, this is due to a pattern of size $n \times n$ which overlaps $C_\mathrm{in}$ and $C_\mathrm{out}$, which is impossible.

Thus $z \in X_{\psi}$ and $z_1 \in X_{\phi} \subseteq X$. But $z_1$ consists of the patterns $P$ and $Q^R$ surrounded by $\#$-symbols, so it is not in $X$, and we have a contradiction.
\end{proof}

\begin{figure}
	\centering
\begin{tikzpicture}[scale=0.8]
\fill[white!80!black] (0,0) -- (3,0) -- (3,3) -- (0,3) -- cycle;
\fill[white!80!black] (4,0) -- (7,0) -- (7,3) -- (4,3) -- cycle;
\draw (0,0) -- (3,0) -- (3,3) -- (0,3) -- cycle;
\draw (4,0) -- (7,0) -- (7,3) -- (4,3) -- cycle;
\draw (0.5,0.5) -- (2.5,0.5) -- (2.5,2.5) -- (0.5,2.5) -- cycle;
\draw (4.5,0.5) -- (6.5,0.5) -- (6.5,2.5) -- (4.5,2.5) -- cycle;
\draw (1.5,1.5) node {\Huge $P$};
\draw (5.5,1.5) node {\Huge $P^R$};

\draw[<->] (4,4) -- (7,4);
\draw[<->] (8,0) -- (8,3);
\draw (5.5,4) node[above] {$N$};
\draw (8,1.5) node[right] {$N$};
\draw[<->] (5.5,2.5) -- (5.5,3);
\draw (5.5,2.75) node[right] {$n$};
\draw[<->] (6.5,1.5) -- (7,1.5);
\draw (6.75,1.5) node[above] {$n$};

\draw (5.5,0.5) node[above] {$C_\mathrm{in}$};
\draw (5.5,0) node[below] {$C_\mathrm{out}$};

\draw (-0.5,-0.5) node {$\#$};
\draw (3.5,-0.5) node {$\#$};
\draw (3.5,3.5) node {$\#$};
\draw (-0.5,3.5) node {$\#$};
\draw (7.5,-0.5) node {$\#$};
\draw (7.5,3.5) node {$\#$};

\draw[dashed] (-0.25,-0.5) -- (3.25,-0.5);
\draw[dashed] (-0.25,3.5) -- (3.25,3.5);
\draw[dashed] (3.75,-0.5) -- (7.25,-0.5);
\draw[dashed] (3.75,3.5) -- (7.25,3.5);
\draw[dashed] (-0.5,-0.25) -- (-0.5,3.25);
\draw[dashed] (3.5,-0.25) -- (3.5,3.25);
\draw[dashed] (7.5,-0.25) -- (7.5,3.25);

\draw (-2,1.5) node {$\rho_P:$};
\end{tikzpicture}
    \caption{Non $\Sb{1}$-definability of a decidable language}
    \label{fig:nondef}
  \end{figure}
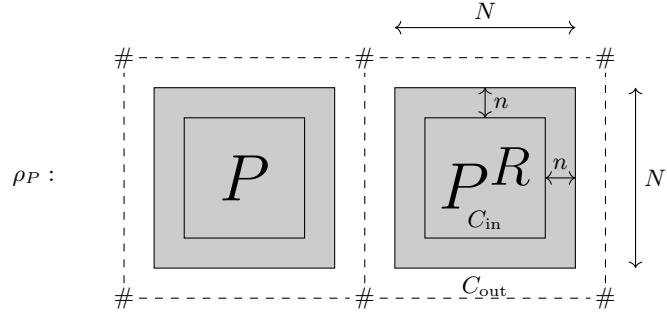

\begin{theorem}\label{Sbncompleteness}
For all $n \ge 2$, all $\Pi^1_{n-1}$ subshifts are $\Pb{n}$-definable, and all $\Sigma^1_{n-1}$ subshifts are $\Sb{n}$-definable.
\end{theorem}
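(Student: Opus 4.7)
Let $X \subseteq \A^{\Z^2}$ be a $\Pi^1_{n-1}$ subshift. By the normal form theorem for the analytical hierarchy, one can fix a total oracle Turing machine $M$ such that $x \in X$ if and only if
\[
\forall A_1 \exists A_2 \cdots Q_{n-1} A_{n-1} \ M^{x, A_1, \ldots, A_{n-1}} \text{ accepts},
\]
where each $A_i \subseteq \N$ is an oracle and $M$ additionally has oracle access to the configuration $x$. Fix a computable bijection $\N \leftrightarrow \Z^2$, so that each $A_i$ can equivalently be encoded by a second-order MSO variable ranging over subsets of $\Z^2$.

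The plan is to construct a $\Pb{n}$ formula of the shape
\[
\phi = \forall X_1 \exists X_2 \cdots Q_{n-1} X_{n-1} \ Q_n Y \ \chi(X_1, \ldots, X_{n-1}, Y),
\]
where $\chi$ is first order, $Q_n$ is opposite to $Q_{n-1}$ (continuing the alternation so that $\phi \in \Pb{n}$), each $X_i$ plays the role of $A_i$, and the tuple $Y$ encodes a simulated computation of $M^{x, X_1, \ldots, X_{n-1}}$ laid out on a northeast quarter-plane, in the style of Property~\ref{computationtiles}. Because $M$ is total, ``$M^{x, X_1, \ldots}$ accepts'' may be expressed equivalently either as $\exists Y\, \chi^+$ (some halting computation exists and is accepting) or as $\forall Y\, \chi^-$ (every halting computation is accepting). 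I would pick whichever of the two forms matches $Q_n$; this fits the alternation pattern of $\Pb{n}$ exactly. The $\Sigma^1_{n-1}$ case is entirely analogous, starting with $\exists X_1$ and landing in $\Sb{n}$.

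The main obstacle is designing the first-order formula $\chi$ so that $Y$ faithfully simulates $M$ with live oracle access to both $x$ and the $X_i$. The computation region is picked out by dedicated components of $Y$, and the computation tiles themselves are captured, in the usual way, by a universal first-order formula imposing local SFT constraints. Oracle queries are implemented by ``query'' tiles embedded in $Y$: a signal emitted by the query tile routes the target position $\vec v$ outward along axis-aligned tracks until it reaches $\vec v$, and local constraints then force the value read by $M$ to agree with $x_{\vec v}$ (respectively with membership in $X_i$). All of this amounts to finitely many local adjacency constraints and is therefore first order, and finiteness of the computation, where required, can be enforced using the predicate $\mathrm{fin}$ from Example~\ref{ex:finite}. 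Verifying $X = X_\phi$ then reduces to observing that the outer quantifier block of $\phi$ reproduces the normal form of $X$ and that $\chi$ correctly captures the recursive matrix.
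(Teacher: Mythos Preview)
Your argument has a real gap in the normal-form step. With a \emph{total} oracle machine, the matrix ``$M^{x,A_1,\ldots,A_{n-1}}$ accepts'' is $\Delta^0_1$ relative to the oracles, and then the innermost set quantifier collapses by compactness of Cantor space: a total machine that accepts does so after reading only finitely many oracle bits, so $\exists A\,[M^{A}\text{ accepts}]$ is merely $\Sigma^0_1$, and dually $\forall A\,[M^{A}\text{ accepts}]$ is $\Pi^0_1$. Iterating, $n-1$ alternating set quantifiers over a total-machine matrix yield only $\Pi^1_{n-2}$ (and for $n=2$ only $\Pi^0_1$), not $\Pi^1_{n-1}$. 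Since your extra quantifier $Q_n Y$ is, as you yourself observe, freely choosable and hence irrelevant, your $\Pb{n}$ formula defines at best a $\Pi^1_{n-2}$ condition on $x$, and the theorem is not established. The paper does not use a total machine: its simulated computation runs forever, and the first-order part of $\psi$ tests whether a special state $q_s$ occurs above an arbitrarily placed half-plane (the row $r'$ in $X_n$), which encodes the genuinely $\Pi^0_2$/$\Sigma^0_2$ condition ``$q_s$ is visited infinitely (resp.\ finitely) often''. That extra first-order strength is exactly the missing level.

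A second, more structural difference is how oracle access to $x$ is arranged. You propose routing query signals from the computation region to arbitrary cells $\vec v\in\Z^2$ and back; this must work for infinitely many queries inside a single configuration $Y$ defined by local rules, and you have not explained how the signals avoid conflicting with one another or with the computation. The paper never gives the machine live access to $x$. Instead it uses the two outermost quantified configurations to pick a finite rectangle $R$ and encode the pattern $x|_R$ as a finite word on a single row, with a one-cell challenge/witness gadget (the ``type-1'' $X_2$) certifying that the encoding is faithful; the machine then reads only that finite word from its own tape. Because $X$ is a subshift, verifying $x|_R\in\Le(X)$ for every $R$ is equivalent to $x\in X$, so no global routing is needed.
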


\begin{proof}[sketch]
  We sketch the proof for $n = 3$.
  In the first case, we have a subshift $X \subseteq \A^{\Z^2}$ whose language is defined by a $\Pi^1_2$ formula $\forall_1 A_1 \exists_1 A_2 \forall k \exists m \xi$, where $\xi$ is computable.
  We construct an MSO formula $\phi = \forall X_1 \exists X_2 \forall X_3 \psi$, where the $X_i$ are constrained by FO conditions, that defines $X$.

  We follow the technique of \cite[Theorem 3]{torma}.
  The configuration $X_1$ contains a finite rectangle $R$, an encoding $w$ of a finite rectangular pattern over $\A$, and an infinite binary word $w_1$.
  The configuration $X_2$ contains either a gadget witnessing that $w$ is not the encoding of the pattern $x|_R$ in the main configuration $x \in \A^{\Z^2}$ (either due to having the wrong dimensions, or differing at a single coordinate), or an infinite binary word $w_2$.
  The configuration $X_3$ contains an infinite simulated computation of a Turing machine that can access the infinite words $w_1$ and $w_2$ as oracle tapes.
  The machine checks sequentially for $k = 0, 1, 2, \ldots$ that there exists $m \in \N$ such that $\xi$ holds.
  Whenever such an $m$ is found, the machine enters a special state $q_s$ and moves to the next value of $k$.
  The configuration also contains an upper half-plane $H$.
  The formula $F$ checks that if $X_2$ contains the word $w_2$, then $X_3$ contains a $q_s$ within the half-plane $H$.

  Suppose $x \vDash \phi$.
  Given a rectangle $R$, take an $X_1$ that correctly encodes $x|_R$ into a word $w$ and contains some $w_1$.
  There exists an $X_2$ that necessarily contains a $w_2$ (because a witness for $w$ not encoding the pattern does not exist), and wherever we place the upper half-plane in $X_3$, it contains $q_s$.
  Hence the machine in $X_3$ enters $q_s$ an infinite number of times, and $x|_R \in \Le(X)$.

  Suppose then $x \in X$.
  Every valid configuration $X_1$ contains some rectangle $R$ and some words $w$ and $w_1$ above it.
  If $w$ does not encode $x|_R$, we can find a witness and choose it as $X_2$.
  If it does, we can choose $X_2$ to contain the word $w_2$ that corresponds to a suitable $A_2$.
  Then every $X_3$ satisfies $\psi$, since either it does not have the correct form, or the Turing machine enters $q_s$ an infinite number of times.
  Hence $x \vDash \phi$.

  In the second case, we have a subshift $X \subseteq \A^{\Z^2}$ whose language is defined by a $\Sigma^1_2$ formula $\exists_1 A_1 \forall_1 A_2 \exists k \forall m \xi$, where $\xi$ is computable.
  We construct an MSO formula $\phi = \exists X_1 \forall X_2 \exists X_3 \psi$ that defines $X$.
  Here, $X_1$ contains the infinite computation of a Turing machine that simply lists all finite rectangles $R$ and nondeterministically guesses a pattern $P \in \A^R$ and an infinite binary word $w_1$ for each.
  It also selects one cell $\vec w \in \Z^2$ as the ``origin''.
  The configuration $X_2$ selects one of these rectangles, and either selects a position $\vec v \in R$ and contains a witness gadget for $x_{\vec w + \vec v} = P_{\vec v}$, or contains an infinite binary word $w_2$ at the same position as the word $w_1$ of the encoded rectangle.
  The configuration $X_3$ is similar to the $X_3$ of the $\Pi^1_2$ case, except that the Turing machine looks for an $m$ such that $\xi$ does not hold.
  The formula $F$ is almost as before: if $X_2$ contains a $w_2$, then $X_3$ does not contain a $q_s$ within $H$.

  Suppose $x \vDash \phi$, and choose an $X_1$ from $\phi$ with origin $\vec w$.
  Given a rectangle $R$, we can find it encoded in $X_1$ together with a pattern $P$ and a word $w_1$.
  For each $X_2$ that selects the encoding of $R$ and $P$ and a position $\vec v \in R$, we get a witness for $x_{\vec w + \vec v} = R_{\vec v}$.
  For those $X_2$ that contain a $w_2$ instead, we get an $X_3$ whose simulated Turing machine enters $q_s$ a finite number of times.
  Hence $x|_R \in \Le(X)$.
  
  Suppose then $x \in X$, and choose a $X_1$ with origin $\vec 0$ that correctly encodes every rectangular pattern $P$ in $x$, and for each, contains a word $w_1$ corresponding to the set $A_1$ given by the $\Sigma^1_2$ formula of $\Le(X)$.
  Every $X_2$ either contains a witness for the correctness of a single coordinate in some encoded pattern (which are all correct by definition) or a word $w_2$.
  Then the machine simulated in $X_3$ enters $q_s$ only a finite number of times, so we can place the half-plane $H$ above the last occurrence to satisfy $\psi$.
  Hence $x \vDash \phi$.
\end{proof}

\section{Conclusion}

In Section \ref{sec:msosub} we determined the complexity of $C$-SUB for various classes $C$.
FOSUB turned out to be $\Pi^0_4$-complete, while $\ob{\Pi}_n$-SUB and $\ob{\Sigma}_n$-SUB for $n \geq 1$ are both $\Pi^1_n$-complete.

The results of Section~\ref{sec:complexity} are summarized in Table \ref{tab:sumMSOlanguages}.
Note that we have completely characterized the $\ob{\Sigma}_n$ subshifts with $n \geq 2$, as well as the languages of $\ob{\Sigma}_n$ sets.
We have also pinpointed the maximal complexity of the languages of first order and $\ob{\Sigma}_1$ subshifts, which are both $\Sigma^0_2$, higher than the well known $\Sigma^0_1$ bound of SFTs and sofic shifts.
The main remaining question concerns the maximal complexity of $\ob{\Pi}_1$ subshifts: is it $\Sigma^1_n$, the same as for general $\ob{\Pi}_1$ sets, or is it strictly lower?

\begin{table}[htp]
  \centering
  \caption{Summary of the complexity of MSO languages.}
  \label{tab:sumMSOlanguages}
  {
    \setlength{\tabcolsep}{0.5em}
    \renewcommand{\arraystretch}{1.2}
    \begin{tabular}{c|c|c|c|c|}
      & \multicolumn{2}{c}{Sets} & \multicolumn{2}{|c|}{Subshifts} \\
      & Max complexity & Includes all in & Max complexity & Includes all in \\
      FO & $\Sigma^0_2$ & $\Delta^0_1 \not\subseteq$ & $\Sigma^0_2$ & $\Delta^0_1 \not\subseteq$ \\
      $\ob{\Sigma}_1$ & $\Sigma^0_2$ & $\Delta^0_1 \not\subseteq$ & $\Sigma^0_2$ & $\Delta^0_1 \not\subseteq$ \\
      $\ob{\Pi}_1$ & $\Sigma^1_1$ & $\Delta^0_1 \not\subseteq$ & $[\Pi^0_3, \Sigma^1_1]$ & $\Delta^0_1 \not\subseteq$ \\
      $\ob{\Sigma}_n, n \geq 2$ & $\Sigma^1_{n-1}$ & $\Sigma^1_{n-1}$ & $\Sigma^1_{n-1}$ & $\Sigma^1_{n-1}$ \\
      $\ob{\Pi}_n, n \geq 2$ & $\Sigma^1_n$ & $\Pi^1_{n-1}$ & $[\Pi^1_{n-1}, \Sigma^1_n]$ & $\Pi^1_{n-1}$
    \end{tabular}
  }
\end{table}

\begin{credits}
  \subsubsection{\ackname}
  Ilkka Törmä was supported by the Academy of Finland grant 359921.

  \subsubsection{\discintname}
  The authors have no competing interests.
\end{credits}

\bibliographystyle{plainurl}
\bibliography{citations}

\newpage

\appendix
\section{Appendix: Skipped or shortened proofs}

\begin{proof}[of Lemma \ref{equivclass}]
We may assume that the radius of $\phi$ is at most 1, by introducing new existentially quantified variables for intermediate terms if needed; for example, a subformula $v = \mathrm{East}(\mathrm{East}(w))$ can be replaced by $\exists u \ v = \mathrm{East}(u) \land u = \mathrm{East}(w)$.
We also assume $\phi$ is in prenex normal form, so $\phi = Q_1 v_1 Q_2 v_2 \cdots Q_m v_m \psi$, where each $Q_i$ is a quantifier and $\psi$ is quantifier-free.

For $0 \le \ell \le m$, consider the alphabet $\A_\ell=\A \times \{0,1\}^\ell$.
For $0 \le i \le \ell$, let $\pi_i$ be the natural projection from $\A_\ell$ to $\A$ if $i = 0$ and to $\{0,1\}$ if $i \ge 1$.
Also, for $\ell < m$ let $\rho_\ell$ be the natural projection from $\A_{\ell+1}$ to $\A_\ell$.
We extend these projections to work on patterns and configurations as well.

A configuration $x = (y, b_1, \ldots, b_\ell) \in \A_\ell$ is \emph{consistent} if there is exactly one 1-symbol on each binary layer $b_i$. Such a consistent configuration \emph{satisfies} $\phi$, if the first layer $y$ satisfies $Q_{\ell+1} v_{\ell+1} \cdots Q_m v_m \psi$ when for each $i \in [1, \ell]$, the value of $v_i$ is the position of the 1-symbol on $b_i$.

Let $n \geq 0$ and $0 \le \ell \le m$.
Let $C_f$ be a $\sim_{(n,k)}$ equivalence class on $\A_\ell$.
We say that $C_f$ is \emph{consistent} if the following conditions hold.
\begin{itemize}
    \item For each $\vec{v} \in \nsq$ and each $j \in \llbracket 1, \ell \rrbracket$, there is exactly one pattern $P \in \A_\ell^{n \times n}$ such that $\pi_j(P_{\vec{v}}) = 1$ and $f(P)>0$. In this case, we must have $f(P)=1$.
    \item For each $j \in \llbracket 1, \ell \rrbracket$, the patterns $P \in \A_\ell^{n \times n}$ such that $f(P) > 0$ and $\pi_j(P)$ contains a 1 can be \emph{glued} together, i.e.\ there exists a single pattern $Q$ over $\A_\ell$ in which all of these patterns appear and with only one 1-symbol on each $\{0,1\}$-layer.
\end{itemize}
Notice that if an equivalence class $C_f$ is consistent then all $x \in C_f$ are consistent.

We inductively define unions $\C_\ell$ of equivalence classes on the alphabets $\A_\ell$, starting from $\C_m$ and finishing with $\C_0$.
Let $\C_m=\bigcup_f C_f$ be the union of those consistent $\sim_{(3,2)}$-equivalence classes on the alphabet $\A_m$ such that all $x \in C_f$ satisfy $\phi$. The set $\C_m$ is computable since all variables are interpreted in a fixed position and the radius of $\phi$ is 1.

Let now $0 \le \ell < m$, and suppose $\C_{\ell+1} = \bigcup_f C_f$ has been defined as a union of $(n,k)$-equivalence classes on $\A_{\ell+1}$ for some $n = 2p+1 \geq 3$ odd and $k \geq 2$.
Let $R = [0, n-1]^2$ and $T = [-p, n+p-1]^2$.
We say that a $\sim_{(n,k)}$ equivalence class $C_g = \bigcap_{P \in \A_{\ell+1}^R} g(P)$ is an \emph{extension} of a $\sim_{(4p+1,n^2+k)}$-equivalence class $C_f = \bigcap_{Q \in \A_\ell^T} f(Q)$ if the following conditions hold.
\begin{enumerate}
    \item For each $P \in \A_\ell^R$, we have
      \begin{equation}
        \label{eq:ext}
    \sum_{\substack{Q \in \A_\ell^T \\ Q|_R=P}} f(Q) \ge \sum_{\substack{Q \in \A_{\ell+1}^R \\ \rho_\ell(Q)=P}} g(Q)
    \end{equation}
    and equality must hold when $g(P \times 0^R)<k$.
    \item Let $Q \in \A_{\ell+1}^T$ be such that $\pi_{\ell+1}(Q_{(p,p)}) = 1$ and the $n \times n$ patterns $Q_1$, $Q_2$, $Q_3$ and $Q_4$ at each corner of $Q$ satisfy $g(Q_i) = 1$. Then we require $f(\rho_{\ell}(Q)) \geq 1$.
\end{enumerate}

Similarly, a configuration $y \in \A_{\ell+1}^{\Z^2}$ is an \emph{extension} of $x \in \A_\ell^{\Z^2}$ if both are consistent and $\rho_\ell(y) = x$.

\begin{claim}
\label{cl:extension}
If $y \in C_g$ is an extension of $x \in C_f$, then $C_g$ is an extension of $C_f$.
\end{claim}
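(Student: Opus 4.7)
The plan is to verify the two conditions in the extension definition by using the consistency of $y$ as the key structural tool. Since $y$ is consistent, its $(\ell+1)$-th binary layer contains a unique $1$-symbol, at some position $\vec w \in \Z^2$. For any $Q' \in \A_{\ell+1}^R$ with $\rho_\ell(Q') = P$ and $Q' \neq P \times 0^R$, the $(\ell+1)$-th layer of $Q'$ contains at least one $1$, so an occurrence of $Q'$ in $y$ must align those $1$-cells with $\vec w$; this pins the occurrence position to a unique $\vec v$ inside the $n^2$-element set $\vec w - R$, giving $\#_y(Q') \leq 1$ and $\sum_{Q' \neq P \times 0^R} \#_y(Q') \leq n^2$ for every fixed $P \in \A_\ell^R$.

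For condition 1, I set $N_P = \#_x(P)$ and use the identity
\[
N_P = \sum_{\substack{Q \in \A_\ell^T \\ Q|_R = P}} \#_x(Q) = \sum_{\substack{Q' \in \A_{\ell+1}^R \\ \rho_\ell(Q') = P}} \#_y(Q')
\]
coming from $\rho_\ell(y) = x$. If $N_P < n^2 + k$, the cap on the left never triggers, so the LHS equals $N_P$, while the RHS is at most $N_P$. If $N_P \geq n^2 + k$, the LHS is at least $n^2 + k$ (either some $Q$ reaches the cap, or all terms are below it and sum to $N_P$), and the RHS is bounded by $g(P \times 0^R) + n^2 \leq k + n^2$ via the key observation. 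For the equality clause, $g(P \times 0^R) < k$ yields $\#_y(P \times 0^R) < k$ and hence $N_P < n^2 + k$; moreover each $\#_y(Q')$ for $Q' \neq P \times 0^R$ is at most $1 \leq k$, so no cutoff is triggered on the right either and both sides collapse to $N_P$.

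For condition 2, I take $Q \in \A_{\ell+1}^T$ satisfying the hypotheses and show that $Q$ itself occurs in $y$ at $\vec v := \vec w - (p,p)$, which immediately yields $\rho_\ell(Q)$ as an occurrence in $x = \rho_\ell(y)$ and hence $f(\rho_\ell(Q)) \geq 1$. Because $n = 2p+1$, the center $(p,p)$ of $T$ belongs to all four corner $n \times n$ squares, so each corner subpattern $Q_i$ carries a $1$ in its $(\ell+1)$-th layer at the corresponding corner cell of $R$. The alignment argument then pins $Q_i$'s only possible occurrence to the unique position $\vec u_i$ where that $1$-cell aligns with $\vec w$, and the hypothesis $g(Q_i) = 1$ (with $k \geq 2$) forces $\#_y(Q_i) = 1$, so $Q_i$ does occur at $\vec u_i$. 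A short geometric computation shows that the four windows $R + \vec u_i$ cover precisely $T + \vec v$, and since the $Q_i$ are restrictions of the common pattern $Q$ they agree on overlaps; the four occurrences therefore glue together to give $y|_{T + \vec v} = Q$.

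The main obstacle is the cutoff bookkeeping in condition 1: the bound $\sum_{Q' \neq P \times 0^R} \#_y(Q') \leq n^2$ is exactly the slack between the two cutoff thresholds $k$ and $n^2 + k$, and the equality clause requires chaining $g(P \times 0^R) < k \Rightarrow \#_y(P \times 0^R) < k \Rightarrow N_P < n^2+k$ through the key observation. Condition 2 is then essentially geometric, with the unique $1$ in layer $\ell+1$ serving as a global anchor that locks the four corner patterns into a single occurrence of $Q$.
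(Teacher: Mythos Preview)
Your proof is correct and follows essentially the same approach as the paper: for condition~1 you use the same counting identity $\#_x(P) = \sum_{Q|_R=P}\#_x(Q) = \sum_{\rho_\ell(Q')=P}\#_y(Q')$ together with the $n^2$-slack coming from the unique $1$-symbol on layer $\ell+1$, and for condition~2 you anchor the four corner patterns to $\vec w$ and glue them into a single occurrence of $Q$. Your treatment of condition~2 is in fact considerably more explicit than the paper's, which simply asserts that ``such a $Q$ appears in $y$ since $\pi_{\ell+1}(y)$ contains a single $1$-symbol'' without spelling out the corner-gluing argument you provide.
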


\begin{proof}[of Claim]
  Let $P \in \A_\ell^R$.
  Since $y$ is an extension of $x$, we have
  \begin{equation}
    \label{eq:ext2}
    \sum_{\substack{Q \in \A_\ell^T \\ Q|_R=P}} \#_x(Q) = \sum_{\substack{Q \in \A_{\ell+1}^R \\ \rho_\ell(Q)=P}} \#_y(Q).
  \end{equation}
  Also, $\sum_{\substack{Q \in \A_{\ell+1}^R \\ \rho_\ell(Q)=P}} g(Q) \le n^2+k$, since $g(P \times 0^R) \le k$ and $g(Q) \le 1$ for every other $Q \in \A_{\ell+1}^R$ extending $P$.
  If~\eqref{eq:ext} were false, every $Q \in \A_\ell^T$ such that $Q_{|R}=P$ would thus satisfy $f(Q) < n^2+k$, implying $f(Q) = \#_x(Q)$.
  Using~\eqref{eq:ext2} we now have
  \[
    \sum_{\substack{Q \in \A_\ell^T \\ Q|_R=P}} f(Q) = \sum_{\substack{Q \in \A_\ell^T \\ Q|_R=P}} \#_x(Q) = \sum_{\substack{Q \in \A_{\ell+1}^R \\ \rho_\ell(Q)=P}} \#_y(Q) \geq \sum_{\substack{Q \in \A_{\ell+1}^R \\ \rho_\ell(Q)=P}} g(Q),
  \]
  a contradiction.

  Now suppose that $g(P \times 0^R)<k$.
  In this case, $g(Q) = \#_y(Q)$ holds for all $Q \in \A_{\ell+1}^R$ extending $P$, so the wanted equality.


  Finally, if $Q \in \A_{\ell+1}^T$ satisfies the hypothesis of the second condition, then such a $Q$ appears in $y$ since $\pi_{\ell+1}(y)$ contains a single 1-symbol.
  So $\rho_\ell(Q)$ appears in $x$, which implies $f(\rho_\ell (Q)) \ge 1$.
\end{proof}

We have two cases depending on the quantifier $Q_{\ell+1}$.
If $Q_{\ell+1} = \exists$, then let $\C_\ell$ be the union of consistent equivalence classes on $\A_\ell$ such that at least one of their consistent extensions is in $\C_{\ell+1}$.
If $Q_{\ell+1} = \forall$, then $\C_\ell$ is the the union of consistent equivalence classes on $\A_\ell$ such that all of their consistent extensions are in $\C_{\ell+1}$.

Now we prove by downward induction on $\ell$ that for each $0 \le \ell \le m$, the union $\C_\ell$ equals the set $X_\ell$ of consistent configurations $x \in \A_\ell^{\Z^2}$ that satisfy $\phi$.
For the initialization step $\ell = m$, it is clear that $X_m = \C_m$.
Let then $\ell < m$, and suppose the claim holds for $\C_{\ell+1}$. Let $x \in \A_\ell^{\Z^2}$.

Suppose first that $Q_\ell = \exists$. If $x \in X_\ell$, then it has a consistent extension $y \in X_{\ell+1}$, which we obtain by choosing the value of $v_\ell$ so that $\phi$ is still satisfied, and putting a 1-symbol at that coordinate. We have $y \in \C_{\ell+1}$ by the induction hypothesis, and so $x$ is in $\C_\ell$ thanks to the Claim.

For the other direction, suppose $x \in \C_\ell$, so $x$ lies in some consistent equivalence class $C_f \subseteq \C_\ell$.
By the definition of $\C_\ell$, there exists a consistent equivalence class $C_g \subseteq \C_{\ell+1}$ that is an extension of $C_f$.
Since $C_g$ is consistent, there exists a pattern $Q \in \A_{\ell+1}^T$ that contains occurrences of all patterns $P \in \A_{\ell+1}^R$ such that $g(P) > 0$ and $\pi_{\ell+1}(P)$ contains a 1.
The corners of this pattern satisfy $g(Q_i) = 1$, so item 2 of the definition of extension guarantees $f(\rho_\ell(Q)) \geq 1$; in particular, $\rho_\ell(Q)$ occurs in $x$.
Let $y \in \A_{\ell+1}^{\Z^2}$ be an extension of $x$ where the 1-symbol in $\pi_{\ell+1}(y)$ is at the middle of an occurrence of $\rho_\ell(Q)$.

Now we prove $g(P) = \min(\#_y(P), k)$ for all $P \in \A_{\ell+1}^R$.
We have $y \in C_g \subseteq \C_{\ell+1}$, and thus $y \in X_{\ell+1}$ by the induction hypothesis, which in turn implies $x \in X_{\ell}$.
If $P$ is an extension of $P' \in \A_\ell^R$ with a 1 on the last $\{0,1\}$ layer, then $g(P) = \#_y(P) \leq 1$ by construction.

Suppose then $P = P' \times 0^R$. If $g(P)<k$, then we have equality in~\eqref{eq:ext} and the right hand side is strictly less than $n^2+k$.
Now every $Q' \in \A_{\ell}^T$ such that $Q'|_R=P'$ appears exactly $f(Q')$ times in $x$.
This implies
\[
  \sum_{\substack{Q' \in \A^R_{\ell+1} \\ \rho_{\ell}(Q')=P'}} \#_y(Q') =
  \sum_{\substack{Q' \in \A^T_{\ell} \\ Q'|_R=P'}} \#_x(Q') =
  \sum_{\substack{Q' \in \A^T_{\ell} \\ Q'|_R=P'}} f(Q') =
  \sum_{\substack{Q' \in \A^R_{\ell+1} \\ \rho_{\ell}(Q')=P'}} g(Q').
\]
Then we can conclude that $\#_y(P' \times 0^R)=g(P' \times 0^R)$.

The only one remaining case is when $g(P' \times 0^R)=k$.
Using~\eqref{eq:ext2} and~\eqref{eq:ext} we compute
\begin{align*}
  \sum_{\substack{Q' \in \A^R_{\ell+1} \\ \rho_{\ell}(Q')=P'}} \#_y(Q')= {} &
  \sum_{\substack{Q' \in \A^T_{\ell} \\ Q'|_R=P'}} \#_x(Q') \ge
  \sum_{\substack{Q' \in \A^T_{\ell} \\ Q'|_R=P'}} f(Q') \\
  {} \ge {} &
  \sum_{\substack{Q' \in \A^R_{\ell+1} \\ \rho_{\ell}(Q')=P' \\ Q' \neq P' \times 0^R}} g(Q') =
  k + \sum_{\substack{Q' \in \A^R_{\ell+1} \\ \rho_{\ell}(Q')=P' \\ Q' \neq P' \times 0^R}} \#_y(Q').
\end{align*}
This implies $\#_y(P' \times 0^R) \ge k$, as desired.

The case of $Q_\ell = \forall$ is dual to the above.
If $x \in C_f \subseteq \C_\ell$ and $y \in \A_{\ell+1}^{\Z^2}$ is an extension of $x$, then by the Claim the equivalence class $C_g$ of $y$ is an extension of $C_f$.
This implies $C_g \subseteq \C_{\ell+1}$ by the definition of $\C_\ell$, and by the induction hypothesis $y \in X_{\ell+1}$.
Hence $x \in X_\ell$.

For the other direction, if $x \in X_\ell$, then all extensions of $x$ are in $X_{\ell+1} = \C_{\ell+1}$.
Let $C_f$ be the equivalence class of $x$ and $C_g$ a consistent extension of $C_f$; we wish to show that $C_g \subseteq \C_{\ell+1}$.
As in the $Q_\ell = \exists$ case, let $y \in \A_{\ell+1}^{\Z^2}$ be an extension of $x$ such that the 1-symbol in $\pi_{\ell+1}(y)$ is in the middle of the pattern $\rho_\ell(Q)$ given by item 2 of the definition of extension.
As before, we can prove $y \in C_g$, and since $y \in X_{\ell+1}= \C_{\ell+1}$, this implies $C_g \subseteq \C_{\ell+1}$ and so $x \in C_\ell$.



We conclude that $X_{\phi}=\C_0$.
This construction is computable, since there is only a finite number of $\sim_{(n,k)}$-equivalence classes for each $n$, $k$ and alphabet $\A_\ell$.
\end{proof}

\begin{proof}[of Lemma \ref{mfoin}]
    Let $\phi$ be a first order formula.
  Using Lemma \ref{equivclass} we can compute $n, k \in \N$ and disjoint $\sim_{(n,k)}$-equivalence classes $C_1, \ldots, C_r$ such that $X_{\phi}=\bigcup_{i=1}^r C_i$. We want to prove that knowing whether $\bigcup_{i=1}^r C_i$ is closed is $\Pi^0_4$.

  For a $\sim_{(n,k)}$-equivalence class $C=\iC{\A}$, define the function $f_C : \A^{n^2} \to \{0, \ldots, k\}$ by $f_C(P)=a_P$.
  A \emph{map} of $C$ is a function $M: \A^{n^2} \rightarrow \Pe_{\mathrm{fin}}(\Z^2)$. 
  For $\ell \in \N$ and two maps $M_1, M_2$ of $C$, we say that $(C,\ell,M_1,M_2)$ \emph{tiles the plane} if there exists $x \in \A^{\Z^2}$ such that for each $P \in \A^{n^2}$:
\begin{itemize}
    \item if $f_C(P) < k$, then the pattern $P$ appears at positions $M_1(P) \cup M_2(P)$ and no other positions,
    \item if $f_C(P) = k$, then the pattern $P$ appears on positions $M_1(P) \cup M_2(P)$ (with $|M_1(P) \cup M_2(P)|\ge k$), and if $|M_1(P)|<k$, the only positions of $[-\ell,\ell-1]^2$ at which $P$ appears are in $M_1(P)$.
    \end{itemize}
    To decide whether a given quadruple tiles the plane is $\Pi^0_1$.
    Namely, by compactness, it is equivalent to the condition that for all $m \geq L$, where $L$ is the maximum of $\ell$ and $\| \vec v \|_\infty + n$ for $P \in \A^{n^2}$ and $\vec v \in M_1(P) \cup M_2(P)$, there exists a finite pattern $Q \in \A^{[-m,m]^2}$ that satisfies the two conditions above.
    The configuration $x$ exists as a limit point of such patterns.
  
  Now let us consider the following arithmetical formula $F$:
  there exist $m \in \N$, an index $i \in \{1, \ldots, r\}$ and two functions $I_1, I_2 : \A^{n^2} \to \{0, \ldots, k\}$ such that $I_1 + I_2 = f_{C_i}$, and
  for all $\ell > m$ there exist two maps $M_1, M_2$ of $C_i$ such that
  \begin{itemize}
  \item $|M_j(P)| = I_j(P)$ for each $j = 1,2$ and $P \in \A^{n^2}$,
  \item $M_1(P) \subseteq [-m,m]^2$ and $M_2(P) \cap [-\ell,\ell-1]^2 =\emptyset$ for each $P \in \A^{n^2}$,
  \item $(C_i,\ell,M_1,M_2)$ tiles the plane, and
  \item there is no $j \in \{1, \ldots, r\}$ such that $(C_j,0,M_1,P \mapsto \emptyset)$ tiles the plane.
  \end{itemize}
  By its form, $F$ is $\Sigma^0_4$.

  We prove that $F$ holds if and only if $\bigcup_{i=1}^r C_i$ is not closed.
  Suppose first that $F$ holds with $m$, $i$, $I_1$ $I_2$ and the maps $M^\ell_1, M^\ell_2$ and tilings $(x_\ell)_{\ell > m}$ they induce. By compactness, we may assume without loss of generality that $x_\ell$ converges to a limit configuration $x \in \A^{\Z^2}$, and that the maps $M^\ell_1 = M_1$ are all equal since their image is contained in $[-m,m]^2$. If $x \in C_j$ for some $j$, then $x$ is a tiling for $(C_j,0,M_1, P \mapsto \emptyset)$, as the positions $\bigcup_{P \in \A^{n^2}} M^\ell_2(P)$ move away from the origin as $\ell$ grows and vanish in the limit. But such a tiling does not exist by the assumption that $F$ holds. Hence we have $x \notin \bigcup_{j=1}^r C_j$, and the set is not closed.

Suppose then that $\bigcup_{i=0}^r C_i$ is not closed. Let $(x_j)_{j \in \N}$ be a sequence of tilings in $\bigcup_{i=0}^r C_i$ such that $x_j \overset{j \rightarrow \infty}{\longrightarrow} x \notin \bigcup_{i=0}^r C_i$. Assume without loss of generality that there is an index $i$ such that $x_j \in C_i$ for all $j$.

Let $m \in \N$ be such that all occurrences of every $n \times n$ pattern which appears less than $k$ times in $x$ are in $[-m,m]^2$. For $P \in \A^{n^2}$, let $I_1(P)$ be the number of occurrences of $P$ in $x$ counted up to $k$, and $I_2(P) = f_{C_i}(P) - I_1(P)$.
Then, given $\ell > m$, we can choose a large $j$ and define $M_1$ and $M_2$ to give the positions of patterns in $x_j$ inside and outside of $[-m,m]^2$, choosing some subsets of positions of sizes $I_1(P)$ and $I_2(P)$ if there are more that that. Doing so, $M_1$ maps only in the square $[-m,m]^2$ and $M_2$ only outside the square $[-\ell,\ell-1]^2$ if $j$ is chosen large enough.

We claim that these choices verify the formula $F$.
The first two items hold by the definition of the $M_j$, and the quadruple $(C_i, \ell, M_1, M_2)$ tiles the plane as witnessed by $x_j$.
Finally, the quadruple $(C_j, 0, M_1, P \mapsto \emptyset)$ does not tile the plane for any $j \in \{1, \ldots, r\}$; if it did, and $y \in \A^{\Z^2}$ was the witness, then necessarily $y \in C_j$ and $y \sim_{(n,k)} x$, implying that $x \in C_j$, a contradiction.
\end{proof}

\begin{proof}[of Lemma \ref{mfohard}]
  We reduce the $\Pi^0_4$-complete problem $\forall COF$: given a Turing Machine $M$ which takes two inputs, does $M(v,\_)$ have a cofinite language for all $v$?
For this, we define an SFT $X$ 
and add on it a first order formula to get a formula $\phi$. Then, we will prove that $M \in \forall COF$ if and only if $\phi \in \MFOSUB$.

Let $M$ be a turing machine. The SFT $X$ is the one described by Figure \ref{fig:mfohard} (the only allowed patterns are those of size $2 \times 2$ in this figure, but other inputs $v$ and $w$ are possible, even the infinite one). 
The input $v$ is copied to the southeast until it hits the gray diagonal signal emitted by the ${*}$-symbol at the end of the second input $w$, where it is erased.
The diagonal turns black after the point of erasure.
The machine $M$ is not allowed to halt, so in all configurations containing two finite inputs $v$ and $w$, $M(v,w)$ does not halt. 

Let $\phi'$ be the formula which defines this SFT, and let $\phi=\phi' \land ((\exists x \ \exists y \ P_1(x) \land P_2(y)) \Rightarrow \exists z \ P_3(z))$, where tiles 1, 2 and 3 are showed in Figure \ref{fig:mfohard}). Intuitively, $\phi$ requires the configuration to satisfy $\phi'$, and if there is a diagonal with a finite input $v$ written on it, then there also exists a finite input $w$.

We show that $M \in \forall COF$ if and only if $\phi \in MFOSUB$.
Suppose first $M \notin \forall COF$. We prove that $X_{\phi}$ is not closed. Let $v$ be an input such that for infinitely many inputs $w_1, w_2, w_3, \ldots$ the computation $M(v,w_n)$ does not halt. For each $n$, let $x_n$ be a configuration of $X_{\phi}$ that contains a simulated computation with the inputs $v$ and $w_n$, and the endpoint of the black diagonal line is at the origin. This sequence has a limit configuration $x \in \A^{\Z^2}$, which is not in $X_{\phi}$ since it contains both black and gray diagonal tiles but no $\$$. Hence $X_{\phi}$ is not a subshift.

Now we suppose that $X_{\phi}$ is not a subshift (hence not closed) and prove that $M \notin \forall COF$. Let $(x_n)_{n \in \N}$ be such that $x_n \overset{n \rightarrow \infty}{\longrightarrow} x$ with $x_n \in X_{\phi}$ for all $n$ and $x \notin X_{\phi}$. Notice that the $x_n$ and $x$ are in $X_{\phi'}$ since this set is closed. As $x$ does not satisfy $\phi$, it must have a diagonal with a finite input $v$ on it but no computation tile. For large $n$, all $x_n$ have such a diagonal with the same $v$. But they are in $X_{\phi}$, so they must contain $\$$-symbols, and hence simulated computations. Since $x_n$ converges to $x$, it means that for all $m$, there exists $x_n$ such that the second input $w$ written on it is longer than $n$ symbols. It implies that for infinitely many $w$, $M(v,w)$ does not halt, and so $M \notin \forall COF$.
\end{proof}







\begin{proof}[of Lemma \ref{sisubhard}]
Let $n \ge 1$.
To prove that $\Sb{n}$-SUB is $\Pi^1_n$-hard, we reduce an arbitrary problem in $\Pi^1_n$. Let $R$ be a computable predicate. We construct a formula $\phi \in \Sb{n}$ such that
$\forall A_1 \hdots Q A_n \ \ob{Q} k \  Q m \ R$ holds if and only if $X_\phi$ is closed,
with $Q=\exists$ if $n$ is even, $Q=\forall$ otherwise.

Let $\delta$ be the formula which describes a northeast quarter-plane coloured with 0 and 1 such that each row is coloured in a same way, and the other tiles are blank. All limit points of such configurations are also allowed. Let $\phi'$ be the formula for a two-layer tiling where the first layer satisfies $\delta$ and the second layer is coloured with the alphabet $\{0,1\}$, but such that at most one position holds a $1$-symbol. Then $X_{\phi'}$ is a subshift. The infinite binary sequence repeated on each line of a quarter-plane is interpreted as the oracle $A_1$ in the following.

Let $\phi=\exists X_2 \in X_{\delta} \hdots \ Q X_n \in X_{\delta} \ Q \ Z \ \phi' \wedge \psi$, where $\psi$ is the configuration stating the following: if the main configuration $x \in X_{\phi'}$ has a northeast quarter-plane, then
\begin{itemize}
\item each of $X_2, \ldots, X_n$ has one as well at the same position,
\item the configuration $Z$ contains, in the same quarter-plane, a simulated computation of a Turing machine $M$ that can use the binary sequences of $x$ and the $X_i$ as oracles,
\item if $\ob{Q} = \exists$ and the second layer is all-$0$, then $M$ visits a special state $q_s$ a finite number of times (a $\Sb{1}$ condition by Example \ref{ex:finite}), and
\item if $\ob{Q} = \forall$, then the second layer is all-$0$ and $M$ visits the state $q_s$ an infinite number of times.
\end{itemize}
  Notice that the configuration $Z$ is determined by what was quantified before, so its quantifier is irrelevant.
  Hence $\phi$ is $\Sb{n}$.

  The machine $M$ is exactly as in the proof of Lemma \ref{pisubhard}, except that looks for numbers $k, m$ such that $R(A_1, \ldots, A_n, k, m)$ does hold.

Notice that for the configuration $x$ with $A_1$ on the quarter-plane and an all-$0$ second layer,we have $x \vDash \phi$ if and only if $\exists A_2 \hdots \ Q A_n \ \ob{Q} k \ Q m \ R$ holds.
Hence, if $\forall A_1 \hdots Q A_n \ \ob{Q} k \ Q m \ R$ holds, then $X_{\phi}$ is the cartesian product of two subshifts (the second layer being the at-most-one-$1$ subshift), hence a subshift.
On the other hand, if $\forall X_1 \hdots Q X_n \ \ob{Q} k \ Q m \ R$ is false, then there exists a configuration $x \notin X_\phi$ with an all-$0$ second layer. Since all the configurations with the same first layer but with a $1$-symbol somewhere on the second layer are in $X_{\phi}$, it is not a subshift.
\end{proof}

\begin{proof}[of Theorem \ref{pb1language}]
Let $X$ be the subshift in which all the configurations are blank except at most one infinite word, which contains the code $[M]$ of a Turing machine $M$ followed by an infinite sequence of tiles coloured with $\{0,1,|\}$ (plus all limit configurations of these). Let $\psi$ be the first order formula such that $X=X_{\psi}$.

Let $\phi_1$ be a $\Pb{1}$ formula which requires that if $[M]$ is finite, then there are infinitely many $|$-tiles. We interpret this as $[M]$ being followed by an infinite list of natural numbers in binary.

Let $\phi_2$ be a $\Pb{1}$ formula which requires that if $[M]$ is finite, then for each natural number $n$ in the list, $M$ never halts on $n$. This can be done by universally quantifying on a configuration containing a simulated computation of $M(n)$.

Let $\phi_3$ be a $\Pb{1}$ formula which requires that if $[M]$ is finite, then for each natural number $n$, either $n$ occurs in the list, or $M(n) \downarrow$. This can be done by simulating $M(n)$ and simultaneously searching for $n$ in the list, and requiring that this computation halts at some point.

Let $\phi_4$ be a $\Pb{1}$ formula which requires that if $[M]$ is finite, the list of natural numbers written after $M$ is strictly increasing. Once again, we can do this using a single universal quantifier.

Then let $\phi$ be a $\Pb{1}$ formula equivalent with $\psi \land \phi_1 \land \phi_2 \land \phi_3 \land \phi_4$ (just rewriting this formula in the prenex form). 
  
We prove that $X_{\phi}$ is a subshift. Let $(x_n)_{n \in \N} \in X_{\phi}$ such that $x_n \longrightarrow x$. First, $x \in X_{\psi}$ since $X_{\psi}$ is a subshift. If $x \notin X_{\phi_i}$ for some $i = 1, 2, 3, 4$, then $x$ contains a finite code $[M]$ of a Turing machine. For large enough $n$, this $[M]$ is also written on all the $x_n$. Because there is only one configuration up to translation in $X_{\phi}$ containing $[M]$ -- you can neither move, remove nor add new elements of the list -- then the sequence $(x_n)_{n \in \N}$ is eventually constant. Hence $x \in X_{\phi}$ and $X_{\phi}$ is a subshift.

Now, let us prove that $X_{\phi}$ has a $\Pi^0_3$-hard language. To prove it, we reduce the $\Pi^0_3$-complete problem $coCOF$, which consists of the codes $[M]$ of Turing machines $M$ such that for infinitely many inputs $n$, $M$ does not halt on $n$. In fact we claim that $[M] \in coCOF$ if and only if $[M] \in \Le(X_{\phi})$.

If $[M] \in coCOF$, then the configuration with $[M]$ written at the origin and followed by the infinite list of inputs $n$ on which $M$ does not halt, separated by $|$-tiles and ordered increasingly, is in $X_{\phi}$. Hence $[M] \in \Le(X_{\phi})$.

If $[M] \in \Le(X_{\phi})$, then $[M]$ appears in a configuration $x \in X_{\phi}$ and this configurations gives an infinite list of inputs on which $M$ does not halt. Hence $M \in coCOF$.
\end{proof}

\begin{proof}[of Theorem \ref{Sbncompleteness}]
Let $X \subset \A^{\Z^2}$ be a subshift with a $\Pi^1_{n-1}$ or $\Sigma^1_{n-1}$ language.
The language can be defined by a formula of the type
\[F = Q_1 A_1 \ob{Q}_1 A_2 \cdots Q_1 A_{n-1} \ob{Q} m Q k \xi\]
if $n$ is even, and
\[F = Q_1 A_1 \ob{Q}_1 A_2 \cdots \ob{Q}_1 A_{n-1} Q m \ob{Q} k \xi\]
when $n$ is odd, where $\xi$ is computable.

We construct a $\Pb{n}$ or $\Sb{n}$ formula $\phi$ with $X_\phi = X$, which will have the form $\phi = Q_1 X_1 \ob{Q}_1 X_2 \cdots \ob{Q}_n X_n \psi$ or $\phi = Q_1 X_1 \ob{Q}_1 X_2 \cdots Q_n X_n \psi$ for a first order formula $\psi$, depending on the parity of $n$.
The idea is that the configurations $X_1, \ldots, X_{n-1}$ will contain infinite binary words corresponding to $A_1, \ldots, A_{n-1}$, and $X_n$ will contain geometric and computational structures corresponding to $m$ and $k$.
In addition, $X_1$ and $X_2$ are used to extract the contents of the main configuration $x \in \A^{\Z^2}$ into a form that is usable by a simulated Turing machine.
In the $\Pi^1_{n-1}$ case we use the technique of \cite[Theorem 3]{torma} to perform this extraction, while in the $\Sigma^1_{n-1}$ case we employ a new method.
In both cases $X_1$ encodes rectangular patterns over $\A$, and $X_2$ has one of two forms: type-1 configurations are used to verify that the patterns of $X_1$ correspond to patterns in the main configuration, and type-2 configurations analyze these patterns together with the remaining configurations $X_3, \ldots, X_n$.

Suppose first we are in the $\Pi^1_{n-1}$ case.
The universally quantified $X_1$ contains a finite rectangle $R$.
The row directly above its north border is highlighted.
On the highlighted row is a finite word $w \in \A^*$ that may or may not encode the contents of $R$ in the main configuration $x \in \A^{\Z^2}$, followed by an infinite binary word $w_1 \in \{0,1\}^\N$ that will correspond to the set $A_1$ in formula $F$.
From the southeast corner of the rectangle, a signal travels to the northeast until it hits the highlightred row.

The second, existentially quantified configuration $X_2$ has one of two types.
A type-1 configuration contains a single $\$$-symbol inside $R$, from which two signals emanate, one to the north and another to the northeast.
The former signal carries information about the contents of the $\$$-cell in the main configuration $x$.
The signals terminate when they reach the highlighted row.
See Figure~\ref{fig:Pbncompleteness}.

\begin{figure}[htp]
  \centering
  \begin{tikzpicture}
    \draw [fill=black!20] (0,1) rectangle (5,3);
    \draw (0,3) rectangle (9,3.5);
    \draw [dashed,->] (5,1) -- (7.25,3.25);
    \draw (9,3) -- ++(1,0);
    \draw (9,3.5) -- ++(1,0);
    \draw [dashed] (10,3) -- ++(1,0);
    \draw [dashed] (10,3.5) -- ++(1,0);
    \draw (0,3.5) -- ++(0,1);
    \draw [dashed] (0,4.5) -- ++(0,1);

    \node [draw,fill=white] (dollar) at (2,2) {$\$$};
    \draw [dashed,->] (dollar) -- (2,3.25);
    \draw [dashed,->] (dollar) -- (3.25,3.25);

    \node at (0.5,2.5) {$R$};
    \node at (0.5,3.25) {$w$};
    \node at (9.5,3.25) {$w_1$};

    \draw [dotted,->] (0,3.5) -- ++(1,1) -- ++(-0.5,0.5) -- ++(0.5,0.5);
    \node [right] at (1.5,4.5) {\large{Computation of $M$}};
  \end{tikzpicture}
  \caption{Probing a single symbol from $R$ with a type-1 configuration in the $\Pb{n}$-case of the proof of Theorem~\ref{Sbncompleteness}.}
  \label{fig:Pbncompleteness}
\end{figure}
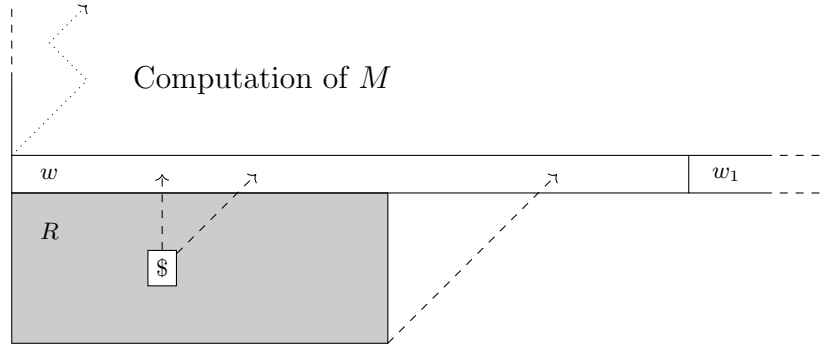

Above the highlighted row we simulate a Turing machine that has as its input the word $w$ and each of the signals on $X_1$ and $X_2$ that terminate on it.
From these signals it is possible to compute the width and height of the rectangle $R$, the relative position of the $\$$-symbol in $R$, and the symbol $a \in \A$ at that position in the configuration $x$.
The machine checks that the word $w$ does not encode a rectangular pattern of the same dimensions as $R$ that contains symbol $a$ at the given position, and then enters a halting state.

A type-2 configuration $X_2$ only highlights the same row as $X_1$.
If $n > 2$, it also contains an infinite binary word $w_2 \in \{0,1\}^\N$ on that row.

Suppose then we are in the $\Sigma^1_{n-1}$ case.
The configuration $X_1$ contains a single symbol $\#$, which we think of as the ``origin''.
The origin is the corner of an infinite northeast quarter-plane inside which we simulate a nondeterministic Turing machine.
The machine runs for an infinite time, enumerating all finite squares $[a,b] \times [c,d] \subset \Z^2$, and for each it nondeterministically guesses a pattern $P \in \A^{[a,b] \times [c,d]}$, writes it on the tape together with the numbers $a, b, c, d \in \Z$, and enters a special state $q_s$ on the left end of its tape.
When it enters state $q_s$, that row also contains an infinite word in $\{0,1\}^\N$ on a separate track.
Hence, the rows of the quarter-plane that start with $q_s$ have the form $q_s u v w_1$, where $u \in \{0,1\}^*$ encodes a square $[a,b] \times [c,d]$, $v \in \A^{(c-a+1)(d-b+1)}$ encodes a pattern $P$ of that shape, and $w_1 \in \{0,1\}^\N$ is an infinite binary word that will correspond to the set $A_1$ in the formula $F$.

The configuration $X_2$ can have one of several types.
A type-1 configuration highlights a single row of the quarter-plane of $X_1$, which must contain a $q_s$, and hence have the form $q_s u v w_1$ as above.
This row splits the right half-plane containing it into two parts, which are colored differently.
Hence the configuration has four parts with different colors: the highlighted row, the quarter-plane $Q_1$ above the row, the quarter-plane $Q_2$ below the row, and the remaining half-plane $H$.
The configuration also contains exactly one $\$$-symbol, which can \emph{a priori} be placed anywhere.
Within $X_2$, a signal is emitted from the $\#$-symbol in $X_1$ to the northeast, and at most two signals are emitted from the $\$$ in $X_2$ whose directions depend on whether it lies in the highlighted row (no signals), $Q_1$ (south and southeast), $Q_2$ (north and northeast) or $H$ (east and southeast, each turning southeast or northeast toward the highlighted row when meeting $Q_1$ or $Q_2$).
See Figure~\ref{fig:Sbncompleteness}.
Each of these signals remembers its source, and in the case of $\$$, also the symbol under it in the main configuration.

\begin{figure}
\centering
\begin{tikzpicture}
  
  	\draw (0,-1) -- (0,5);
  	\draw (0,3) -- (6,3);
  	
  	\draw [dashed, -{Straight Barb[length=1.5mm]}] (1,1) -- (3,3) (0,0) -- (1,1);
  	\node [draw,fill=white] at (0,0) {$\#$};
  	\node [draw,fill=white] at (0,3) {$q_s$};
  	\draw [dashed, -{Straight Barb[length=1.5mm]}] (4,4) -- (4,3);
  	\draw [dashed, -{Straight Barb[length=1.5mm]}] (4.5,3.5) -- (5,3) (4,4) -- (4.5,3.5);
  	\node [draw,fill=white] at (4,4) {$\$$};
  	\draw [dashed, -{Straight Barb[length=1.5mm]}] (2.5,1) -- (2.5,3);
  	\draw [dashed, -{Straight Barb[length=1.5mm]}] (3.5,2) -- (4.5,3) (2.5,1) -- (3.5,2);
  	\node [draw,fill=white] at (2.5,1) {$\$$};
  	\draw [dashed, -{Straight Barb[length=1.5mm]}] (-2.5,1.5) -- (0,1.5);
  	\draw [dashed, -{Straight Barb[length=1.5mm]}] (0.5,2) -- (1.5,3) (0,1.5) -- (0.5,2);
  	\draw [dashed, -{Straight Barb[length=1.5mm]}] (-1.5,2.5) -- (0,4) (-2.5,1.5) -- (-1.5,2.5);
  	\draw [dashed, -{Straight Barb[length=1.5mm]}] (0.5,3.5) -- (1,3) (0,4) -- (0.5,3.5);
  	\node [draw,fill=white] at (-2.5,1.5) {$\$$};
  	
  	\foreach \x in {3,4,4.5,2.5,5,1.5,1}{
  		\fill (\x,3) circle (0.05cm);
  	}
  	
  	\node at (3,-0.5) {$Q_1$};
  	\node at (3,4.5) {$Q_2$};
  	\node at (-2,4) {$H$};
  	
  \end{tikzpicture}
  \caption{The signals of a type-1 configuration in the $\Sb{n}$-case of the proof of Theorem~\ref{Sbncompleteness}. Several choices for the $\$$-symbol are shown; an actual configuration would contain only one.}
  \label{fig:Sbncompleteness}
  \end{figure}

In the quarter-plane $Q_1$ we simulate a Turing machine whose input is the highlighted row, including any signals that pass it.
From these signals the machine can determine the relative positions of the $\#$-symbol and the $\$$-symbol -- let $\vec v \in \Z^2$ be their difference -- plus the symbol $s \in \A$ of the main configuration under the $\$$.
The machine checks that if $\vec v \in [a,b] \times [c,d]$, then $P_{\vec v} = s$, and enters a halting state if the check succeeds.

A type-2 configuration highlights a row $r$ of the quarter-plane in $X_1$ containing a $q_s$.
If $n > 2$, it also contains an infinite binary word $w_2 \in \{0,1\}^\N$ on that row.

The remaining part of the construction is identical between the $\Pi^1_{n-1}$ and $\Sigma^1_{n-1}$ cases.
If the configuration $X_2$ is type-2, then the configurations $X_3, \ldots X_{n-1}$ all contain their own infinite binary words $w_3, \ldots, w_{n-1} \in \{0,1\}^\N$ on the same row $r$ as $w_2$, and $X_n$ contains a second highlighted row $r'$ somewhere to the north of $r$.
Starting from $r$, it also contains a simulation of a Turing machine that reads the pattern $P \in \A^{[a,b] \times [c,d]}$ encoded on row $r$ in $X_1$.
In the case that the formula $F$ ends in $\forall m \exists k \xi$, using the infinite binary words $w_1, \ldots, w_{n-1}$ as the sets $A_1, \ldots, A_{n-1}$, the machine checks for increasing $m$ whether there exists $k$ such that $\xi(P, A_1, \ldots, A_{n-1}, m,k)$ holds, only moving from $m$ to $m+1$ when such a $k$ is found.
Whenever it finds a $k$ for a new $m$, it enters a special state $q'_s$.
If  $F$ ends in $\exists m \forall k \xi$ instead, it behaves similarly, except that it looks for a $k$ for which $\xi(P, A_1, \ldots, A_{n-1}, m,k)$ does not hold.
Note that if $n = 2$, then we have $X_2 = X_n$, so this configuration contains two highlighted rows.
In the case that $X_2$ is type-1, we put no constraints on the configurations $X_3, \ldots, X_{n-1}$.

We define the formula $\psi$ as follows: either $X_2$ contains a $\$$ (and hence is of type 1) and a halting state of the simulated Turing machine, or $X_2$ does not contain a $\$$ (and hence is of type 2) and $X_n$ contains a $q'_s$ above the northmost highlighted row if and only if $F$ ends in $\forall m \exists k \xi$.
The configurations $X_1, X_2, \ldots, X_n$ are required to have the forms described above as they are quantified (see the remark above Example \ref{ex:finite}).

We claim that $X_\phi = X$.
Let us handle the $\Pi^1_{n-1}$ case.
Suppose first that $x \in X$, and let $X_1$ be chosen arbitrarily.
It contains a rectangle $R$, and above it, the finite word $w$ and the infinite word $w_1$.
If $w$ does not correctly encode the pattern $x|_R$, then either $R$ is not of the correct shape, in which case we can choose $X_2$ to contain a computation that verifies this, or $x|_R$ and the pattern encoded by $w$ differ in at least one position $\vec v$, and then we can place the $\$$-symbol at that position and let the simulated machine detect the discrepancy.
In both cases we have $x \vDash \phi$ regardless of the remaining $X_i$-configurations.

If $w$ does encode $x|_R$, then we choose $X_2$ as a type-2 configuration.
The words $w_2, w_3, \ldots, w_{n-1}$ can now be chosen or given according to the quantifiers of the $X_i$ in such a way that, depending on the parity of $n$, either $\forall m \exists k \ \xi(x|_R, w_1, \ldots, w_{n-1}, m, k)$ or $\forall m \exists k \ \xi(x|_R, w_1, \ldots, w_{n-1}, m, k)$ holds, since $x|_R \in \Le(X)$.
Then the Turing machine simulated in any $X_n$ enters state $q'_s$ infinitely many times for odd $n$, and finitely many times for even $n$.
Hence, we can choose the northmost highlighted row in $X_n$ arbitrarily for odd $n$, and above every occurrence of $q'_s$ for even $n$.
This shows $x \in X_\phi$.

Suppose then that $x \vDash \phi$, and let $R \subset \Z^2$ be a finite rectangle.
We show that $P = x|_R \in \Le(X)$.
Let $X_1$ contain the rectangle $R$ and above it a word $w$ that encodes $P$, followed by an arbitrary $w_1 \in \{0,1\}^\N$.
Then $X_2$ cannot have type 1, so it must have type 2.
If $n > 2$, the remaining configurations $X_3, \ldots, X_{n-1}$ and words $w_2, \ldots, w_{n-1}$ are chosen to encode the corresponding sets $A_i$ given by $F$ if existentially quantified, or chosen arbitrarily if universally quantified.
After these configurations are fixed, the Turing machine simulated in $X_n$ must enter state $q'_s$ infinitely often when $n$ is even (since the northmost highlighted row in $X_n$ can be arbitrarily high and the computation is deterministic) and finitely often when $n$ is odd (since it cannot do so above the highlighted row).
But this means that $F$ holds for the pattern $P$, which is thus in the language of $X$.
Since $R$ was an arbitrary rectangle and $X$ is a subshift, this implies $x \in X$.

Now we handle the $\Sigma^1_{n-1}$ case.
Suppose first $x \in X$, and choose the configuration $X_1$ so that the $\#$-symbol is at $(0,0)$, the nondeterministic machine chooses each pattern $P \in \A^{[a,b] \times [c,d]}$ as $x|_P$, and the associated word $w_1 \in \{0,1\}^\N$ is such that $F$ holds for $P$ with $A_1$ as $w_1$, which is possible since $P$ is in the language of $X$.
Choose the configuration $X_2$ arbitrarily.
If it contains a $\$$-symbol at some position $\vec v \in \Z^2$, then it is type-1 and has a highlighted row containing a pattern $P \in \A^{[a,b] \times [c,d]}$.
We either have $\vec v \notin [a,b] \times [c,d]$ or $x_{\vec v} = P_{\vec v}$ due to how we chose the pattern $P$ in $X_1$.
If $X_2$ is type-2, we choose the configurations $X_3, \ldots, X_{n-1}$ either arbitrarily or with the words $w_3, \ldots w_{n-1}$ being such that $F$ holds, depending on their quantifiers.
As in the $\Pi^1_{n-1}$ case, the configuration $X_n$ contains the appropriate number of $q'_s$-states, implying $x \vDash \phi$.

Suppose then $x \in X_\phi$, and let $R = [a,b] \times [c,d]$ be an arbitrary rectangle.
We show that $P = x|_R$ is in the language of $X$.
Let $X_1$ be a configuration given for $x$ by $\phi$, which has the $\#$-symbol at some position $\vec v \in \Z^2$ together with the quarter-plane and the simulated Turing machine.
On some row $r$ the machine has entered state $q_s$ with the rectangle $R - \vec v$ and some pattern $P \in \A^{R - \vec v}$.
For each $\vec w \in R$, there is a type-1 configuration $X_2$ whose $\$$-symbol is at $\vec w$.
It satisfies $\vec w - \vec v \in R$, so we must have $P_{\vec w - \vec v} = x_{\vec v}$.
On the other hand, there is a type-2 configuration $X_2$ whose highlighted row is $r$.
When we choose the remaining configurations $X_3, \ldots, X_n$ according to their quantifiers, the computation simulated in the final configuration $X_n$ again guarantees that, depending on the parity of $n$, either $\forall m \exists k \ \xi(P, w_1, \ldots, w_{n-1}, m, k)$ or $\forall m \exists k \ \xi(P, w_1, \ldots, w_{n-1}, m, k)$ holds.
Hence $P = x|_R \in \Le(X)$, and since $R$ was arbitrary and $X$ is a subshift, we have $x \in X$.
\end{proof}

\end{document}